\newtheorem{theorem}{Theorem}[section]
\newtheorem{lemma}[theorem]{Lemma}
\theoremstyle{definition}
\newtheorem{definition}[theorem]{Definition}
\newtheorem{example}[theorem]{Example}
\newtheorem{note}[theorem]{Note}
\renewcommand{\vec}[1]{\underline{#1}}
\newcommand{\ket}[1]{\ensuremath{\left|#1\right\rangle}}
\newcommand{\bra}[1]{\ensuremath{\left\langle#1\right|}}
\newcommand{\neket}[1]{\ensuremath{|#1\rangle}}
\newcommand{\nebra}[1]{\ensuremath{\langle#1|}}
\newcommand{\braoket}[3]{\ensuremath{\left\langle#1\middle|#2\middle|#3\right\rangle}}
\newcommand{\restr}[2]{\ensuremath{\left.\kern-\nulldelimiterspace #1
      \vphantom{\big|}
    \right|_{#2}}}
\newcommand{\C}{\mathbb{C}}
\newcommand{\N}{\mathbb{N}}
\newcommand{\cO}{\mathcal{O}}
\newcommand{\tcO}{\widetilde{\!\mathcal{O}}}
\newcommand{\Sg}{\mathrm{S}}
\newcommand{\I}{\mathds{1}}
\newcommand{\U}{\mathcal{U}}
\newcommand{\V}{\mathcal{V}}
\newcommand{\W}{\mathcal{W}}
\renewcommand{\Pr}{\mathbb{P}}
\newcommand{\Expectation}{\mathbb{E}}
\newcommand{\Tds}{\bm{\mathcal{T}}}
\newcommand{\Rows}{\mathcal{R}}
\newcommand{\Cols}{\mathcal{C}}
\newcommand{\deq}{\coloneqq}
\DeclareMathOperator{\per}{per}
\DeclareMathOperator{\tw}{tw}
\newcommand{\twG}{\widehat{\tw}}
\DeclareMathOperator{\ch}{ch}
\DeclareMathOperator{\rt}{root}
\newcommand{\captionitem}[1]{\hspace*{\parindent}\textbf{(#1)}}
\newcommand{\orcid}[1]{\raisebox{-0.1ex}{\includegraphics[width=6pt]{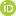}}
  \href{https://orcid.org/#1}{#1}
}
\begin{document}


\title{Laplace expansions and tree decompositions: A faster polytime algorithm for shallow nearest-neighbour Boson Sampling}

\author{Samo Nov\'{a}k}
\thanks{\orcid{0000-0002-2713-9593}}
\email{samo.novak@inria.fr}
\affiliation{%
  Laboratory for the Foundations of Computer Science,
  School of Informatics,
  University of Edinburgh
}%
\affiliation{%
  Mathematical Institute,
  University of Oxford
}%
\affiliation{%
  ORCA Computing, London, UK
}%
\affiliation{%
  Inria Paris, France
}%
\author{Ra\'{u}l Garc\'{i}a-Patr\'{o}n}
\thanks{\orcid{0000-0003-1760-433X}}
\email{rgarcia3@exseed.ed.ac.uk}
\affiliation{%
  Laboratory for the Foundations of Computer Science,
  School of Informatics,
  University of Edinburgh
}%
\affiliation{%
  Phasecraft Ltd., London, UK
}%

\date{24 December 2024, Updated: 14 January 2026}

\begin{abstract}
  In a Boson Sampling quantum optical experiment we send $n$ individual photons into an $m$-mode interferometer and we measure the occupation pattern on the output.
  The statistics of this process depending on the permanent of a matrix representing the experiment, a \#P-hard problem to compute, is the reason behind ideal and fully general Boson Sampling being hard to simulate on a classical computer.
  We exploit the fact that for a nearest-neighbour shallow circuit, i.e. depth $D = \mathcal{O}(\log m)$, one can adapt the algorithm by Clifford \& Clifford~\cite{Cliffords2018} to exploit the sparsity of the shallow interferometer using an algorithm by Cifuentes \& Parrilo~\cite{Cifuentes2015} that can efficiently compute a permanent of a structured matrix from a tree decomposition.
  Our algorithm generates a sample from a shallow circuit in time $\mathcal{O}(n^22^\omega \omega^2) + \mathcal{O}(\omega n^3)$, where $\omega$ is the treewidth of the decomposition which satisfies $\omega \le 2D$ for nearest-neighbour shallow circuits.
  The key difference in our work with respect to previous work using similar methods is the reuse of the structure of the tree decomposition, allowing us to adapt the Laplace expansion used by Clifford \& Clifford which removes a significant factor of $m$ from the running time, especially as $m>n^2$ is a requirement of the original Boson Sampling proposal.%
  \footnote{Accepted in PRA on 6 February 2026 as \url{https://doi.org/10.1103/l4zf-ls9v}.}
\end{abstract}


\maketitle

\section{\label{sec:intro}Introduction}

\begin{figure*}[t]
  \centering
  \includegraphics[scale=0.96]{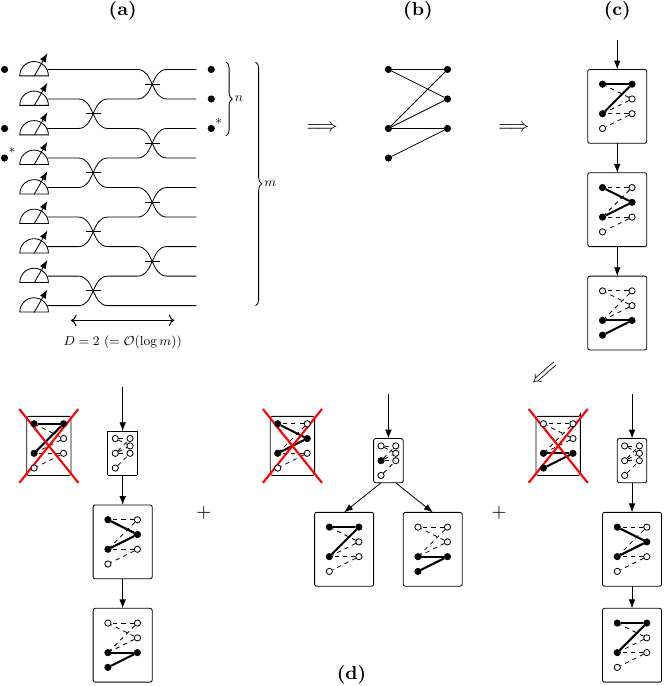}
  \caption{Overview of our work using an example:
    \captionitem{a}~The shallow brickwork circuit of nearest-neighbour beamsplitters with inputs on right and outputs on the left, showing one \textbf{measurement outcome} with previously sampled photons ``$\bullet$'' and a new photon ``$\bullet^*$'' added in this step of the Clifford and Clifford algorithm to expand the sample.
    \captionitem{b}~A \textbf{weighted bipartite graph} represents the outcome: vertices are input and output photons, and non-zero transition amplitudes between them give weighted edges. The outcome amplitude is the sum of weighted perfect matchings in this graph, i.e. the permanent.
    \captionitem{c}~To exploit the \textbf{sparsity} structure, we construct a \textbf{tree decomposition}: a tree whose nodes contain sets of edges of the graph in (b), subject to simple rules. We use a linear decomposition (no branching) where each node corresponds uniquely to an input photon vertex.
    We then modify the tree decomposition in several steps, each time temporarily removing a different input photon $=$ column $=$ node of the tree,  replacing the latter by a dummy with no new information, seen in~\textbf{(d)}. Permanents of these modified trees form the \textbf{Laplace expansion}, our main technical contribution. When done in the right order, conceptualized as a \textbf{``machine head''} walking the tree from one end to the other, we can reuse most of the dynamic programming tables required by the Cifuentes and Parrilo permanent algorithm. As seen in the middle term, we note that the dummy may need to be nonempty.}
  \label{fig:intro}
\end{figure*}

\emph{Boson Sampling} was introduced by Aaronson \& Arkhipov~\cite{Aaronson_Arkhipov2013} in 2010 to propose a suitable candidate for the experimental demonstration of \emph{quantum advantage}, which is when a quantum computer can efficiently solve a problem that a classical computer cannot~\cite{Preskill2012}.
Boson Sampling consists of sending $n$ fully indistinguishable photons through an interferometer acting on $m$ modes, and detecting the photons as they come out at the output of the interferometer. The work by Aaronson and Arkhipov provided strong evidence that such a quantum optics circuit would be hard to simulate on a classical computer when the interferometer was chosen at random from the Haar ensemble.

The intuition behind the hardness of Boson Sampling relies in the fact that its output probabilities require computing permanents of matrices that scale with the number of photons in the interferometer.
The permanent of a matrix $A \in \C^{n \times n}$ is defined as
\[ \per A = \sum_{\sigma \in \Sg_n} \prod_{i=1}^n A_{i, \sigma(i)}, \]
where $\sigma$ is a permutation from the symmetric group on~$n$ elements $\Sg_n$.
Despite its similarity with the definition of the determinant (without switching signs), computing the permanent exactly or approximating it for matrices with complex entries is \#P-hard~\cite{Valiant79}. The best known algorithms have exponential running time of $\cO(n2^n)$: these are due to Ryser~\cite{MR0150048} and Glynn~\cite{Glynn2010}.

A celebrated theoretical result was the sampling algorithm by Clifford and Clifford that can generate an $n$-photon Boson Sampling output with a computational cost that is, up to a factor of 2, equivalent
to the computation of the permanent of a matrix of size $n$. Specifically, the running time of their algorithm is $\cO(n 2^n) + \cO(mn^2)$~\cite{Cliffords2018}.

The prospect of performing an experimental demonstration of quantum advantage attracted a lot of experimental effort and further theoretical work, which recently lead to experiments claiming they have reached quantum advantage on photonic platforms~\cite{Pan2020,Pan2021}.
It is known that one can design classical algorithms that exploit the presence of imperfections in the hardware, such as losses and distinguishability of photons, to run faster and potentially challenge the quantum advantage claims. For example, it was known that for integrated photonic circuits, composed of nearest-neighbour gates with transmission rates decaying exponentially with the depth of the circuit, circuits of depth $D = \Omega(\log m)$
may be simulated as thermal noise in polynomial time, and that a noisy shallow circuit can be simulated in quasi-polynomial time using tensor networks~\cite{Garcia-Patron2019}.

The fact that simulating a shallow circuit, i.e. with depth $D = \cO(\log m)$, requires quasi-polynomial time was expected to be a feature of the use of tensor network techniques, rather than a fundamental result. It was believed that one could find a polynomial time algorithm, for example using results about efficient computation of permanents of sparse matrices by Cifuentes \& Parrilo~\cite{Cifuentes2015}. Indeed, this combination was used by Oh et al. in~\cite{Oh2021} to prove some complexity results about the (non)-hardness of Boson Sampling circuits with a lattice structure. Furthermore, an efficient algorithm for shallow Gaussian Boson Sampling circuits, a variant of Boson Sampling using squeezed states instead of single photons, was proven by Qi et al. in~\cite{qi2020} using similar ideas.

In this work we revisit the inital conjecture in~\cite{Garcia-Patron2019} and show how one can use the result of Cifuentes and Parillo~\cite{Cifuentes2015} to design a polynomial time algorithm for Boson Sampling from shallow circuits, exploiting the fact that the computation of a permanent scales polynomially in the treewitdh of the graph corresponding to our photonic circuit of interest.
A similar approach was given by Oh et al.~\cite{Oh2021} who gave the running time of $\cO(m n^2 2^\omega \omega^2)$.

We present an algorithm that reproduces all the key advantages of the initial Clifford and Clifford algorithm in~\cite{Cliffords2018} adapted to the nearest-neighbors shallow circuit scenario.
Our main technical contribution is the following: We adapt the use of sparsity of matrices in the algorithm by Cifuentes and Parillo~\cite{Cifuentes2015} to combine it with the use of the Laplace expansion in the algorithm by Clifford and Clifford. We do this in a time-efficient manner to remove the dependency on the number of modes in~\cite{Oh2021}, achieving a running time of $\cO(n^22^\omega \omega^2) + \cO(\omega n^3)$.

\subsection{Informal summary}

Here, and in the running example of \Cref{fig:intro}, we give an informal summary of the present work. A \emph{measurement outcome} of a Boson Sampling experiment, defined by giving the interferometer and input and output states (see \cref{fig:intro}a) can be represented by a \emph{bipartite graph} encoding the connectivity of the interferometer (see \cref{fig:intro}b). Here, the vertices in the two partitions correspond to input, resp. output photons, and weighted edges correspond to nonzero transition amplitudes between individual photon positions. The probability amplitude of measuring the whole outcome of several indistinguishable photons is the sum over the perfect matchings in this graph, i.e. the permanent. In the shallow regime, the graph in \cref{fig:intro}b is sparse, and we capture this \emph{sparsity} using a \emph{tree decomposition} in \cref{fig:intro}c, where we group edges of the graph into nodes of a new tree structure, subject to simple connectivity conditions. The algorithm of Cifuentes and Parrilo exploits this tree decomposition to compute the permanent efficiently using a dynamical program~\cite{Cifuentes2015}.

Our main technical contribution
is shown in \Cref{fig:intro}d of the running example. We adapt the use of the \emph{Laplace expansion} of the permanent, a vital component of the algorithm by Clifford and Clifford~\cite{Cliffords2018}, to the permanent computation using tree decompositions.
Computing the probability of the $k$-th photon at a given output $r_k$ (``$\bullet^*$'' in \Cref{fig:intro}a), given a pattern of input and output modes for the previous $k-1$ photons (``$\bullet$''), is captured by the permanent $\per \W^{(r_k)}$ of a matrix
where only the last row depends on the choice of the new possible photon output $r_k$.
Clifford and Clifford~\cite{Cliffords2018} use the Laplace expansion:
\begin{equation}
  \label{eq:summary-permanent}
  \per \W^{(r_{k})}
  = \sum_{j = 1}^{k} \W^{(r_k)}_{k,j} \, \per \W^{(r_k)}_{\diamond (k,j)}.
\end{equation}
where $\W^{(r_k)}_{\diamond (k,j)}$ is the submatrix of $\W^{(r_k)}$ with row $k$ and column $j$ removed. As we remove the last row $k$, these matrices are independent of the choice of $r_k$. Clifford and Clifford exploit this fact to precompute the values of $\{ \per \W^{(r_k)}_{\diamond (k,j)} \}_j$ to later separately do a simple calculation of the $m$ possible output probabilities in polynomial time.

A naive approach to adapt the Cifuentes and Parrillo algorithm for nearest-neighbor shallow circuits would require a new tree decomposition for each of these submatrices, which degrades performance.
However, by ensuring that the nodes of the tree decomposition we use are in 1-to-1 correspondence to input photon vertices, and that the tree is linear (no branching, i.e. isomorphic to a path graph), we can compute each summand in the Laplace expansion by only a small modification of the original tree decomposition.
As showin in \cref{fig:intro}d, we formalize this as temporarily removing a node and replacing it by an ``empty'' dummy for each removed input photon, though the dummy may need some minimal contents. We may conceptualize this as a \emph{``machine head''} walking the tree from one end to the other, at each step replacing the node it visits, and computing the permanent $\per\W^{(r_k)}_{\diamond(k,j)}$ in (\ref{eq:summary-permanent}). These permanents together form the \emph{Laplace decomposition}, and this order of computing then allows us to reuse almost all dynamic programming tables required by the Cifuentes and Parrilo algorithm.

We remark that our abundant reuse of a single global tree decomposition requires us to work in the non-collision setting of $m = \Omega(n)$: a collision of output photons introduces a cycle in the relevant graph, and invalidates the global tree decomposition.

\subsection{Structure of the paper}

In \Cref{sec:bs-cc}, we give a short introduction to linear optics, Boson Sampling, and the Clifford and Clifford algorithms. We offer a more pedagogical overview in \Cref{sec:boson-sampling,sec:clifford}. In \Cref{sec:tree-decompositions}, we introduce the tree decomposition of matrices and the algorithm of Cifuentes and Parillo for computing permanents of sparse matrices. This is complemented by \Cref{sec:appendix-CP}.

The construction of our new algorithm takes place in \Cref{sec:main-results}, where we present some necessary constructions representing shallow photonic circuits, and in \Cref{sec:algorithm}, where we present the algorithm itself.
We conclude and discuss open problems in \Cref{sec:concl-future-work}, and we expand on some technical details in \Cref{sec:banded-matrices,sec:app:tree-decomp-manip}.

\section{\label{sec:bs-cc} Boson Sampling and Clifford \& Clifford algorithm}

To understand our new algorithm, we first review important concepts of single-photon Boson Sampling, and the Clifford and Clifford algorithms. A more pedagogical exposition and detailed derivations are found in \Cref{sec:boson-sampling,sec:clifford}.

\subsection{\label{sec:intro-bs} Short introduction to single-photon Boson Sampling}

In an \emph{ideal} Boson Sampling circuit, the input and output can be written in a basis of quantum states corresponding to the location of $n$ photons in $m$ modes. These can be represented as \emph{number (Fock) states} $\ket{\vec n} = \ket{n_1, \dots, n_m} \in (\C^n)^{\otimes m}$, where $n_i \in \N$ is the number of photons in mode~$i$, and $\sum_i n_i = n$. In this paper, however, it will be useful to see these states in the \emph{first quantization} representation where every photon corresponds to a \emph{qudit} of dimension $m$. Its logical information encodes the mode in which the photon is located, e.g. a basis state $\ket i$ means a photon in mode $i$. We ensure the indistinguishability of photonic states by writing the global quantum state as a symmetric superposition over all permutations of the qudits:
\begin{equation}
  \label{new:eq:qudit-representation}
  \ket{\Phi_{\vec n}} =
  \frac{1}{\sqrt{n! \prod_{i=1}^m n_i!}} \sum_{\sigma \in \Sg_n} \sigma \ket{\vec z},
\end{equation}
where we write $\ket{\vec z} \deq \ket1^{\otimes n_1} \otimes \ket2^{\otimes n_2} \otimes \cdots \otimes \ket{m}^{\otimes n_m}$ with the qudit factors sorted in non-descending order, where $\sigma$ is a permutation in the symmetric group on $n$ elements $\Sg_n$, and the prefactor on the RHS is its normalization~\cite{Moylett2018}. By convention, we denote the sorted product states as $\ket{\vec z}$, and various reorderings by permutations as $\ket{\vec r}$.

An \emph{interferometer} is a linear optical device that is represented by a unitary matrix $\U \in \mathbb{U}(m)$ acting linearly on a single-qudit basis state $\ket i$ as
\begin{equation}
  \ket{i} \xmapsto\U \sum_{j=1}^m \U_{i,j} \ket{j} \equiv \U \ket{i}.
\end{equation}
On a basis state of $n$ photons:
\begin{equation}
  \label{eq:U-action-on-z}
  \ket{\vec z} \equiv \bigotimes_{k=1}^n \ket{z_k}
  \xmapsto\U
  \bigotimes_{k=1}^n (\U \ket{z_k}) = \U^{\otimes n} \ket{\vec z},
\end{equation}
where we have introduced the notation $\ket{z_k}$ to mean the $k$-th qudit in the product $\ket{\vec z}$. Note unitarity of $\U$ ensures the total number of photons is preserved \cite{kok2010}.

\subsubsection{\label{sec:outc-prob} Outcome probabilities}

For a given input state $\ket{\vec n}$ the output probability distribution of a boson sampling circuit of interferometer $\U$ is given by the Born rule:
\begin{equation}
  \label{eq:bornrule}
\Pr[\vec n' | \vec n] = \left|\braoket{\Phi_{\vec n'}}{\:\U^{\otimes n}\:}{\Phi_{\vec n}}\right|^2
\end{equation}
where $\ket{\vec n'}$ is the output of the photon number measurement, and where
we have $\sum_i n'_i =\sum_i n_i = n$, as the photon number is preserved by ideal lossless circuits.
Rewriting (\ref{eq:bornrule}) in first quantization, we obtain
\begin{subequations}
  \begin{align}
    \Pr \left[ \vec{n}' \middle| \vec{n} \right]
    & = \label{eq:outcome-probability-permanent-expanded}
      \left|
      \frac{1}{\sqrt{\prod_{i=1}^m n_i! \; n'_i!}}
      \sum_{\pi \in \Sg_n} \;
      \prod_{j=1}^{n} \;
      \V^{\vec n', \vec n}_{j, \pi(j)}
      \right|^2 \\
    & = \label{eq:outcome-probability-permanent-general}
      \frac{\big| \per \V^{\vec n', \vec n} \big|^2}{\prod_{i=1}^m n_i! \, n'_i!},
  \end{align}
\end{subequations}
where we define a matrix $\V^{\vec n', \vec n}$ to represent the transition amplitudes of individual photons in the $\ket{\vec n} \to \ket{\vec n'}$ process, in the qudit basis. Its components are $\V^{\vec n', \vec n}_{l,h} \deq \braoket{z'_l}{\:\U\,}{z_h}$.
In (\ref{eq:outcome-probability-permanent-expanded}) we recognize the permanent of $\V^{\vec n', \vec n}$. We provide a complete derivation in \Cref{sec:boson-sampl-outcome-probabilities}.

\subsubsection{\label{new:sec:graph-repr}Graphical representation of $\V^{\vec n', \vec n}$}

\begin{figure}[t]
  \centering
  \subfloat[\label{new:fig:graphical-rep-U-to-V}]{%
    \includegraphics[scale=0.7]{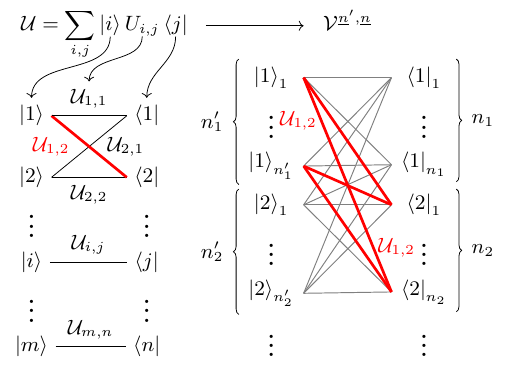}
  }
  \qquad
  \subfloat[\label{new:fig:graphical-rep-V}]{%
    \includegraphics[scale=0.7]{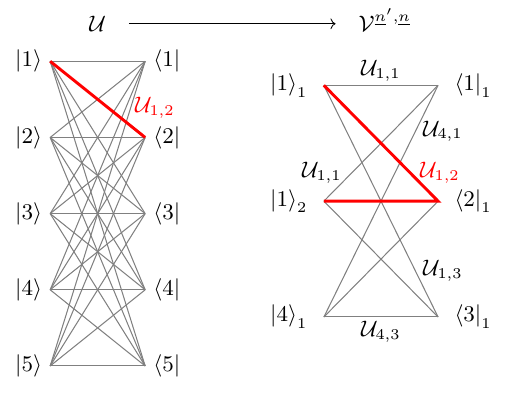}
  }
  \caption{Graphical representation:
    \textbf{(a)}~Graphs $G(\U)$ and $G(\V^{\vec n', \vec n})$ and their relationship. Left (right) partition corresponds to outputs (inputs). Note that $\U$ is the biadjacency matrix of $G(\U)$, and likewise $\V^{\vec n', \vec n}$ of $G(\V^{\vec n', \vec n})$. The output partition of the graph $G(\V^{\vec n', \vec n})$ consists of $n'_i$ copies $\neket i_1, \dots, \neket i_{n'_i}$ of output vertex $\ket{i}$ of $G(\U)$; inputs are similarly copied. If there is an edge $e=(\neket{i}, \nebra{j})$ in $G(\U)$, then $\smash{G(\V^{\vec n', \vec n})}$ has edges $(\ket{i}_a, \bra{j}_b)$ for all $a,b$, and these have the same weight as $e$: We show in \textbf{\color{red}red} the edge $(\ket{1}, \bra{2})$ of $G(\U)$ and its copies in $G(\V^{\vec n', \vec n})$, all with weights $\U_{1,2}$. Some edges are omitted for clarity.
    \textbf{(b)}~Example: $n = n' = 3$, $m = 5$, input $\ket{\vec{n}} = \ket{1,1,1,0,0}$ and output $\ket{\vec{n}'} = \ket{2,0,0,1,0}$.}
  \label{new:fig:graphical-rep}
\end{figure}

While $\U \in \C^{m \times m}$ represents an operator acting on modes, the matrix $\V^{\vec n', \vec n} \in \C^{n \times n}$ encodes the behaviour of the $n$ photons as they pass through the interferometer. In general, we get $\V^{\vec n', \vec n} \in \C^{n \times n}$ by taking $n'_i$ copies of row $i$ in $\U$, creating an $n \times m$ matrix, and from this $n_j$ copies of column $j$ \cite{Moylett2018}. In particular, unoccupied modes, or photon \emph{collisions}, lead to absent or copied rows/columns, respectively.

To facilitate understanding of the graphical methods used in \Cref{sec:tree-decompositions,sec:main-results,sec:algorithm}, we fruitfully interpret both $\U$ and $\V^{\vec n', \vec n}$ as weighted adjacency matrices of bipartite graphs, $G(\U)$ and $G(\V^{\vec n', \vec n})$ respectively, as shown in \Cref{new:fig:graphical-rep}, with a particular labeling of vertices useful to understand the transition from $\U$ to $\V^{\vec n', \vec n}$.
The graph $G(\U)$ encodes the relation from input to output modes of the linear interferometer. Vertices $\neket i$ ($\nebra j$) correspond to output (input) modes, and edges to non-zero probability input-output transitions.
The graph $G(\V^{\vec n', \vec n})$ has as vertices copies of $\neket i, \nebra j$ from $G(\U)$ labeled $\neket i_k, \nebra j_l$.
Edges are likewise copied from $G(\U)$, along with their weights, and they represent photons travelling between modes in the $\ket{\vec r}$ basis. The graph represents the possible \emph{paths} that all the \emph{distinguishable} photons may take from input $\ket{\vec z}$ to output $\ket{\vec z'}$. Each choice of which input photon becomes which output photon corresponds to a perfect matching in $G(\V^{\vec n', \vec n})$, and the amplitude of that matching is the product of amplitudes of its edges. The total amplitude of $\neket{\vec z} \to \neket{\vec z'}$ (representing $\neket{\vec n} \to \neket{\vec n'}$) is the sum over all matchings, giving the permanent $\per \V^{\vec n', \vec n}$. We show a simplified example of the relation to permanent in \cref{fig:per-graphs-matchings}.

\begin{figure}[t]
  \includegraphics[scale=0.9]{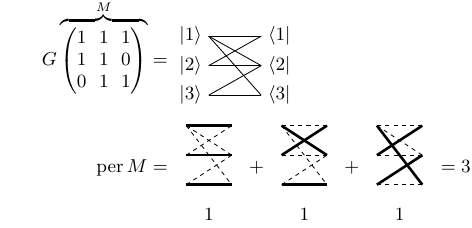}
  \vspace*{-0.7em}
  \caption{Relation between $G(M)$ and $\per M$ for an example binary matrix $M$: $\per M$ counts the perfect matchings of~$G(M)$.}
  \label{fig:per-graphs-matchings}
\end{figure}

\subsection{\label{sec:cliff-cliff-algor} Clifford and Clifford algorithms}

The work~\cite{Cliffords2018} by Clifford and Clifford presents three conceptual steps to arrive at the current best \emph{general} algorithm with running time $\cO(n 2^n)$. In the present summary, and in our pedagogical exposition in \Cref{sec:clifford}, we refer to these as Algorithms CC-A, CC-B, and \ref{alg:cliffords-c}. Note that~\cite{Cliffords2018} presents CC-B and CC-C together as a single algorithm. In this section we focus on the transition from CC-B to the faster CC-C which notably uses the crucial step of \emph{Laplace expansion} of the permanent, adapting which allows us to obtain our algorithm in \Cref{sec:algorithm}.

\subsubsection{\label{new:sec:algorithms-cca-ccb}Algorithms CC-A and CC-B}

The idea behind CC-A consists of sampling in the $\ket{\vec{r}}$ basis (see \Cref{sec:intro-bs}) using a chain rule procedure that builds the sample one photon at a time. We work as if the photons were distinguishable by assigning them labels, and we compute marginal probabilities of partial samples $(r_1, \dots, r_k)$. The trick is to create a sample of one photon, commit to it, then enlarge to two photons, etc.
Marginals of partial samples involve summing over the choices of inputs photons that are sampled, adding complexity. This is sped up in CC-B by permuting the input photons by a random uniformly distributed $\alpha \in \Sg_n$, \emph{new for each sample}. As shown in~\cite{Cliffords2018}, this gives the correct sampling distribution over several samples.

The marginals in CC-B require the computation of $\per \widetilde\V^{(r_1, \dots, r_k), \alpha([k])}$
where the matrix fulfils the same function as $\V^{\vec n', \vec n}$, but it represents a partial sample and is indexed in qudit basis.
Its components are $\widetilde\V^{(r_1, \dots, r_k), \alpha([k])}_{i,j} = \braoket{r_i}{\U}{\alpha(j)}$: observe that $\alpha([k]) = \{ \alpha(1), \dots, \alpha(k) \}$ selects the input modes and $(r_1,\dots,r_k)$ the outputs. We explain the steps to obtain CC-A and CC-B, as well as the running $\cO(mn2^{n+1})$ of CC-B, in \Cref{sec:clifford}.

\subsubsection{\label{sec:lapl-expans-algor}Laplace expansion: algorithm CC-C}

The final improvement of~\cite{Cliffords2018} transfers the factor $m$ from the dominant exponential term to a less relevant quadratic term, leading to an $\cO(n 2^n)$ algorithm that generates a sample with a factor of $2$ of the cost of computing a single output probability. This result is achieved using the \emph{Laplace expansion} of the permanent, defined below, and this is the idea we adapt to tree decomposition techniques in \Cref{sec:algorithm}. We summarize the algorithm CC-C as pseudocode in \Cref{fig:CC-C-float} in \Cref{app:laplace-expansion}.

When expanding the sample from $k-1$ to $k$ photons, we need to compute $m$ permanents of matrices $\widetilde\V^{(r_1, \dots, r_{k}), \alpha([k])}$ where all $k$ columns are shared (predetermined input photons), and all rows except $r_k$ are also the same (previously sampled output photons).
We simplify notation and denote $\W = \widetilde\V^{(r_1, \dots, r_{k-1}), \alpha([k])}$ the $(k-1) \times k$ matrix containing the shared rows and columns, and denote $\W^{(r_{k})} = \widetilde\V^{(r_1, \dots, r_{k}), \alpha([k])}$ the matrix obtained by adding the row $r_{k}$ of $\U$ to $\W$. Using the Laplace expansion of the permanent, we write:
\begin{subequations}
  \begin{align}
    \label{new:eq:laplace-expansion-per-general}
    \per \W^{(r_{k})}
    & = \sum_{j = 1}^{k} \W^{(r_{k})}_{k,j} \, \per \W^{(r_{k})}_{\diamond (k,j)} \\
    \label{new:eq:laplace-expansion-per}
    & = \sum_{j = 1}^{k} \U_{r_{k}, \alpha(j)} \, \per \W_{\diamond j}.
  \end{align}
\end{subequations}
In \eqref{new:eq:laplace-expansion-per-general}, $\W^{(r_{k})}_{\diamond(k,j)}$ is the $(k-1) \times (k-1)$ submatrix of $\W^{(r_{k})}$ with the row $k$ and column $j$ removed. By construction, the matrices $\W^{(r_{k})}_{\diamond(k, j)}$ are the same for all choices of $r_{k}$. In \eqref{new:eq:laplace-expansion-per}, we replace this by $\W_{\diamond j}$ denoting the $(k-1) \times (k-1)$ submatrix of $\W$ with column $j$ removed.

The final improvement by Clifford and Clifford is an adaptation of the Glynn's permanent algorithm~\cite{Glynn2010} to compute the family  $\{ \per \W_{\diamond j} \}_{j=1}^{k}$ in combined time $\cO(k 2^{k})$ and space $\cO(k)$. This is the same scaling, up to a constant, as that required to compute the permanent of the whole $k \times k$ matrix.
Once these permanents are known, computing the $m$ output probabilities requires only the use of the Laplace expansion~\eqref{new:eq:laplace-expansion-per}, taking $\cO(mk)$ operations. We elaborate on the time complexity in \Cref{app:CP-running-time}.

\section{\label{sec:tree-decompositions} Cifuentes and Parillo Algorithm}

We now present the concept of \emph{tree decompositions} of a graph and its use in the algorithm by Cifuentes and Parrilo (CP) to compute the permanent of sparse matrices using dynamic programming \cite{Cifuentes2015}. We will adapt this to CC-C in \Cref{sec:algorithm}.

\subsection{Tree Decompositions}

A tree decomposition is a structure that captures how (dis)connected a graph is, and it can under some conditions speed up the computation of solutions of instances of problems that are otherwise worst-case
NP-hard~\cite{Voigt2016}. Finding an optimal tree decomposition can itself be NP-hard, however, there exist good heuristics that allow improvements for practical problems. An example where tree decomposition is used by the quantum information community is the optimization of tensor network contractions. In our case, the Cifuentes and Parrilo algorithm exploits tree decompositions to compute permanents of structured matrices faster than using the standard algorithms by Ryser or Glynn.

Let $G$ be a bipartite graph of a matrix, in the sense of \Cref{new:sec:graph-repr}, where the set of vertices $V = \Rows \sqcup \Cols$ consists of row and column labels ($\Rows, \Cols$ respectively). The edges $E \subseteq \Rows \times \Cols$, each connecting one row and one column, correspond to the nonzero matrix elements.
In the following, we use the notation $2^X$ for the powerset of a set $X$, i.e. the set of all subsets of $X$.

\begin{figure}[t]
  \includegraphics[scale=0.9]{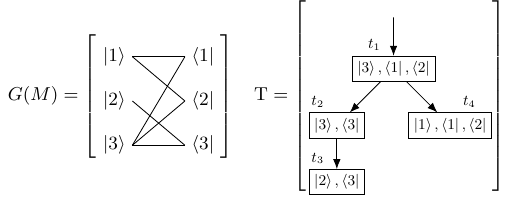}
  \caption{Example tree decomposition $\mathrm{T}$ of some graph of matrix $G(M)$. This tree is rooted at $t_1$.}
  \label{fig:example-treedec}
\end{figure}

\begin{definition}[tree decomposition of a bipartite graph]
  \label{def:treedec}
  Let $G$ be a bipartite graph. A tree decomposition $\mathrm{T} = (T, \rho, \kappa)$ of $G$ is a triple consisting of a tree $T$ rooted in $\rt(T)$, where for each \emph{node} $t \in T$, we denote $\ch(t)$ the unordered set of its children; and
  the functions $\rho : T \to 2^\Rows$ and $\kappa : T \to 2^\Cols$ that label the nodes with subsets of $\Rows$ and $\Cols$. We say that a node $t \in T$ \emph{contains} the elements of $\rho(t)$ and $\kappa(t)$.
  A tree decomposition must satisfy the following axioms:
  \begin{enumerate}[label={(\bfseries T\arabic*)}, ref={(T\arabic*)}]
  \item\label{def:treedec-axiom-vertex-cover}
    The union of contents over the entire tree covers all vertices of the graph:
    $\rho(T) = \Rows$ and $\kappa(T) = \Cols$, where we denote $\rho(T) = \bigcup_{t \in T} \rho(t)$, and analogouly $\kappa(T) = \bigcup_{t \in T} \kappa(t)$.
  \item\label{def:treedec-axiom-edge-cover}
    Every edge of the graph is contained in some node of the tree. This means that for each $(r, c) \in E$, there exists a node $t \in T$ that contains both its endpoints, i.e. $r \in \rho(t)$ and $c \in \kappa(t)$.
  \item\label{def:treedec-axiom-subtree}
    For every $r \in \Rows$ (resp. $c \in \Cols$), the set of all nodes that contain it is a subtree of $T$, i.e. all nodes containing $r$ (resp. $c$) are connected in $T$.
  \end{enumerate}
\end{definition}

We show an example tree decomposition in \Cref{fig:example-treedec}.
By convention, we write the decomposition upright as $\mathrm{T}$, and the underlying  tree slanted as $T$. Whenever we say the word \emph{node}, we mean some $t \in T$ in a  tree decomposition, and when we say \emph{vertex}, it means some $v \in V$ of the graph $G$ being studied.
We write $T_t$ for the \emph{subtree} of $T$ rooted at $t \in T$, that is the tree that contains $t$ and all its descendants. An important consequence of Axiom~\ref{def:treedec-axiom-subtree} that will be relevant later is the following:
if a vertex or an edge is contained in both $T_{c}$ and $T_{c'}$, where $c, c'$ are two distinct children of $t$, then it must also be contained in the parent node $t$.

\subsubsection{Treewidth}

For any graph $G$, there are generally many tree decompositions: we denote this collection $\Tds(G)$. However, not all of them are useful, e.g.~the trivial case of a single node containing the entire graph $G$. The following definition captures how good a tree decomposition is:

\begin{definition}[treewidth~\cite{Cifuentes2015,Voigt2016}]
  \label{def:treewidth}
  For a tree decomposition $\mathrm{T} = (T, \rho, \kappa)$, we define its \emph{treewidth} as:
  \begin{subequations}
    \begin{equation}
      \label{eq:treewidth-of-treedec}
      \tw(\mathrm{T}) \deq \max_{t \in T} \left\{ \vphantom{\big|} |\rho(t)| + |\kappa(t)| \right\} - 1,
    \end{equation}
    that is the size of the \emph{largest} node $t$ in terms of its contents, minus one. For a graph $G$, we define its treewidth as:
    \begin{equation}
      \label{eq:treewidth-of-graph}
      \twG(G) \deq \min_{\mathrm{T} \in \Tds(G)} \tw(\mathrm{T}),
    \end{equation}
  \end{subequations}
  that is the minimum treewidth of any possible tree decomposition of $G$. We distinguish the treewidth of a graph with a hat.
\end{definition}

The treewidth measures how different a graph is from a tree, and we define the RHS of~\eqref{eq:treewidth-of-treedec} with $-1$ so that a tree $Q$ has $\twG(Q) = 1$.

Both in tensor network contraction and in our permanent computations the running time and memory requirements scale exponentially with the treewidth of the graph. Computing with tree-like graphs is in general much easier than graphs with higher treewidth which contain many cycles~\cite{Voigt2016}. In our case, the intuition is that a graph with high treewidth, and thus many cycles, allows many perfect matchings, i.e. summands of the permanent, which leads to a more expensive computation.

\subsection{\label{sec:permanent-algorithm}Permanent algorithm}

\begin{figure}[tb]
  \begin{algorithm}[H]
    \SetAlgoRefName{CP}
    \caption{\textsf{Bipartite permanent}}
    \label{alg:cifuentes}
    \KwIn{
      \begin{itemize}
      \item a matrix $M \in \C^{n \times n}$,
      \item a tree decomposition $\mathrm{T} = (T, \rho, \kappa)$ of $G(M)$
      \end{itemize}
    }
    \KwResult{the permanent $\per M$}

    $\mathit{order} \gets$ list of nodes in $T$, in topological order, starting from leaves

    \For{$t \in \mathit{order}$}{

      $Q[t](R, C) \gets \per \restr{M}{R,C}$ for $R \subseteq \rho(t)$ and $C \subseteq \kappa(t)$

      \If{$t$ is a leaf, i.e. $\ch(t) = \varnothing$}{
        $P[t] \gets Q[t]$
      }
      \Else{

        \For{$c_j \in \ch(t)$}{
          $\begin{aligned}
            &
              \begin{cases}
                \Delta^\rho_{c_j} & \gets \rho(c_j) \setminus \rho(t) \\
                \Lambda^\rho_{c_j} & \gets \rho(c_j) \cap \rho(t)
              \end{cases}
            &
            &
              \begin{cases}
                \Delta^\kappa_{c_j} & \gets \kappa(c_j) \setminus \kappa(t) \\
                \Lambda^\kappa_{c_j} & \gets \kappa(c_j) \cap \kappa(t)
              \end{cases}
          \end{aligned}$

          \For{$R \subseteq \Lambda^\rho_j$ and $C \subseteq \Lambda^\kappa_j$}{
            $Q'[t \,|\, c_j](R, C) \gets (-1)^{|R|} \cdot Q[t](R,C)$

            $Q''[t \gets c_j](R, C) \gets P[c_j](R \cup \Delta^\rho_{c_j}, C \cup \Delta^\kappa_{c_j})$
          }
        }

        \For{$R \subseteq \rho(t)$ and $C \subseteq \kappa(t)$}{
          $P[t](R,C) \gets$ subset convolution of $Q[t]$ and all $Q'[t|c_j], Q''[t \gets c_j]$ as in (\ref{new:eq:CP-subset-convolution})
        }
      }
    }

    \KwRet{$P[r](\rho(r), \kappa(r)) = \per M$} where $r = \rt(T)$
  \end{algorithm}
  \caption{Pseudocode of the Cifuentes and Parrilo algorithm.}
  \label{fig:pseudocode-CP}
\end{figure}

We now present the permanent algorithm by Cifuntes and Parillo from~\cite[Algorithm 2]{Cifuentes2015}, to which we refer as Algorithm \ref{alg:cifuentes}, and whose pseudocode can be found in \cref{fig:pseudocode-CP}. The algorithm exploits the structure and sparsity of the matrix to accelerate the computation of its permanent. Using a tree decomposition of the graph of the matrix, it is able to decompose the matrix into smaller parts and compute their permanents. The tree structure then tells it how to join these parts together. It is a dynamic programming algorithm that stores values in tables for later use. For an $n \times n$ matrix and a tree decomposition of its graph of treewidth $\omega$, the running time of \ref*{alg:cifuentes} is $\cO(n 2^\omega \omega^2)$.\footnote{ The running time is conventionally given as $\tcO(n 2^\omega)$, which hides polynomial factors in $\omega$, but the factor $\omega^2$ can be recovered from~\cite[Lemma 2 and proof of Lemma 14]{Cifuentes2015} as detailed in \Cref{app:CP-running-time}.}

\subsubsection{Dynamic programming tables}

The Algorithm~CP builds the permanent of matrix $M$ from leaves upward. For all nodes and for the steps between child and parent nodes, we need to build two dynamic programming tables $Q[t](R,C)$ and $P[t](R,C)$, both indexed by a subset of rows $R \subseteq \rho(t)$ and of columns $C \subseteq \kappa(t)$ belonging to the node $t$. The table $Q$ stores the permanents of submatrices with rows and columns belonging to the node $t$,
while the table $P$ allows to combine that information with the permanents of its children using a subset convolution:
\begin{equation}
  \label{new:eq:CP-subset-convolution}
  \begin{aligned}
    P[t](R,C) = & \sum \bigg( Q[t](R_t, C_t) \\
                & \times \prod{}_{c_j \in \ch(t)} \ {}
                  Q'[t | c_j](R'_{c_j}, C'_{c_j}) \\
                & \hphantom{\times \prod{}_{c_j \in \ch(t)} \ {}} \times
                  Q''[t \gets c_j](R''_{c_j}, C''_{c_j}) \bigg).
  \end{aligned}
\end{equation}
Note that in the computation of $P[t]$, we also use two \emph{helper tables} $Q'[t|c_j]$ and $Q''[t \gets c_j]$ for each child $c_j$ of $t$. These latter bring the values of $P[c_j]$ into the computation of $P[t]$. The summation runs over partitions of $R$ (resp. $C$) into a families of subsets $\{ R_i\smash{^{(\prime,\prime\prime)}} \}_i$ (resp. $\{ C_i\smash{^{(\prime,\prime\prime)}} \}_i$). We expand on the computation of these tables and valid partitions in \Cref{sec:appendix-CP:tables}. Note that the final result, the permanent of $M$, is found in the root node:
\begin{equation}
  P[r](\rho(r), \kappa(r)) = \per M.
\end{equation}

\section{\label{sec:main-results} Preliminaries to new algorithm}

Before we can construct our main contribution, an algorithm for polytime classical simulation of Boson Sampling from shallow interferometers, we need to review and define a few necessary constructions.

\subsection{Nearest-neighbour circuits}
\label{sec:near-neighb-circ}

We are interested in architectures of ideal intereferometric circuits
composed of nearest-neighbour interactions between modes.
It is well known that such interferometers can be constructed from phase-shifters and beamsplitters~\cite{Carolan2015,Bell2021} which are, respectively, one- and two-mode gates. The latter is defined as the following unitary acting on modes (cf. $\U$)~\cite{Campos1989}:
\begin{equation}
  \label{eq:beamsplitter-matrix}
  B(\theta, \phi_T, \phi_R) =
  \begin{pmatrix}
    e^{i \phi_T} \cos \theta & e^{i \phi_R} \sin \theta \\
    -e^{-i \phi_R} \sin \theta & e^{-i \phi_T} \cos \theta
  \end{pmatrix},
\end{equation}
where $\theta$ is the coupling between the two modes, and $\phi_T, \phi_R$ are the transmitted, resp. reflected, phase shift. Similarly to~\cite{Jozsa2006}, we absorb the phase-shifters, being single-mode operations, into adjacent beamsplitters whenever possible; thus our circuits consist only of beamsplitters.

\begin{figure*}[t]
  \centering
  \hspace*{-2cm}\includegraphics[scale=0.85]{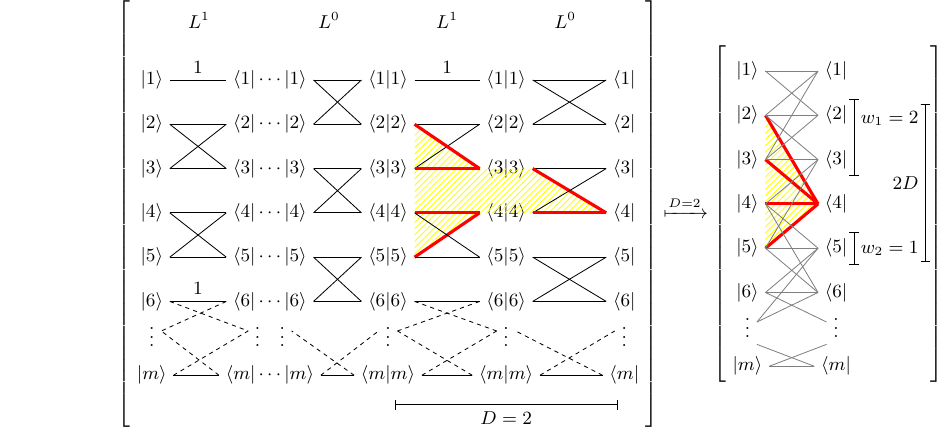}
  \caption[Architecture]{An even-depth example (last layer $L^1$) of an alternating array of beamsplitters in our architecture, represented using the graphical notation from \cref{new:fig:graphical-rep}. The columns represent the layers of gates $L^0$ and $L^1$, and the crossings are beamsplitters. For each mode, one beamsplitter connects it with the mode above, and the next beamsplitter with the mode below, or vice-versa. The bold red lines show the possible paths of photons starting in mode $4$, corresponding to its \emph{causal cone} (filled yellow), over two layers of the circuit. The RHS is the corresponding graph $G(\U)$ resulting from the first two layers of the circuit.}
\label{fig:ABA-schematic}
\end{figure*}

We consider a construction with the beamsplitters arranged in alternating layers as displayed in Figure~\ref{fig:ABA-schematic}, as in the Clements rectangular decomposition~\cite{Clements2016} that guarantees the universality of the interferometric circuit while minimizing its depth $D$. This decomposition also has practical advantages, as it guarantees more uniform losses among the different output modes which limits the effects of those imperfections. A full circuit is composed as an alternation of two slightly different types of layers of beamsplitters, $L^0$ and $L^1$ (where we omit parameters), that allow the spreading of the correlations over distant modes following a causal cone, emphasized in \cref{fig:ABA-schematic}.

As consequence of the causal cone, the unitary $\U$ of the circuit has a banded structure, meaning there exist numbers $w_1, w_2$ s.t. $\U_{i,j} = 0$ for $i - j > w_1$ or $j - i > w_2$. Therefore, the \emph{bandwidth} $w$ is bounded as:
\begin{equation}
  \label{new:eq:bandwidths-of-ABA}
  w = w_1 + w_2 + 1 \le 2D,
\end{equation}
as shown by Lemma~\ref{lemma:bandwidths-of-ABA} in \Cref{sec:banded-matrices}.
This can be intuitively seen in \cref{fig:ABA-schematic} where the bandwidth corresponds to the furthest distance (in modes) a photon can travel away from its input mode. It is easy to see that each beamsplitter layers expands each of $w_1$ and $w_2$ by at most $1$, and the width after the first layer is $2$.

\subsection{\label{sec:tree-decomp-band}Tree decomposition of a banded matrix}

There are different tree decomposition one can design for an $m \times m$ banded matrix $\U$. For algorithmic design and simplicity we will use a tree decomposition $\mathrm{T^C} = (T, \rho, \kappa)$ shown in \Cref{fig:TC-on-matrix} that is not optimal but simple.\footnote{The optimal decomposition uses two nodes per column, each containing $w-1$ rows, where $w$ is the number of rows with a nonzero in a column. These nodes are neighbours, one containing the upper $w$ rows, and the other the lower. This was given by Cifuentes and Parrilo~\cite{Cifuentes2015}, modulo exchange of rows and columns between their paper and ours.}
The decomposition $\mathrm{T^C}$ consists of a linear tree $T=\{t_1, \dots, t_m\}$ where the node $t_i$ has a single child $t_{i+1}$ for $i=1,\dots, m-1$ (and $t_m$ is a leaf). There is a correspondence between the nodes and columns of the matrix, such that $\kappa(t_i) = \{ \bra i \}$ for all $i \in [m]$. This tree decomposition associates to each node $t_i$, corresponding to column $i$, all rows $j$ where the matrix elements of that column may be nonzero: $\ket j \in \rho(t_i)$ iff $\U_{ji}$ is within the band. If $\U$ has bandwidths $(w_1, w_2)$, then in general:
\begin{equation}
  \rho(t_i) = \{ \ket{j - w_1}, \ket{j - w_1 + 1}, \dots, \ket{j + w_2} \},
\end{equation}
modulo non-existant row labels.
There are at most $w$ elements in $\rho(t_i)$, and together with a single element in $\kappa(t_i)$, recalling that the treewidth is defined with a constant $-1$ (see \eqref{eq:treewidth-of-treedec}), this makes the treewidth of the decomposition $\tw(\mathrm{T^C}) \le w \le 2D$.

\begin{figure}
  \includegraphics[scale=0.7]{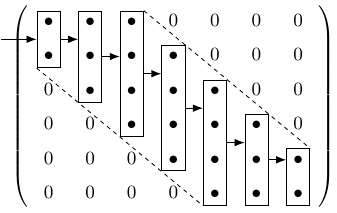}
  \caption{Example linear tree decomposition $\mathrm{T^C}$ where tree nodes are shown as boxes encompassing possibly nonzero elements (``$\bullet$''). Remark that nodes correspond to parts of columns of the matrix. The band is shown by dashed lines.}
  \label{fig:TC-on-matrix}
\end{figure}

\subsection{\label{new:sec:modif-orig-tree}Tree decompositions of submatrices with column permutations}

Having a low-width tree decomposition, we are able to compute the permanent of a matrix efficiently using Algorithm~\ref{alg:cifuentes}. However, while the original matrix $\U$ has a banded structure allowing us to find the decomposition, this structure is concealed when we permute columns and take submatrices (delete rows and columns) to obtain $\widetilde\V^{(r_1, \dots, r_k),\alpha([k])}$, transformations required by the Clifford and Clifford sampling algorithms. A priori, we need a new tree decomposition for each submatrix or column permutation, and these are no longer easy to find. However, in what follows we present a \emph{systematic approach} to obtain all tree decompositions of the required submatrices from the initial tree decomposition of~$\U$ by performing a set of \emph{simple and efficient manipulations}.

\subsubsection{\label{sec:permutation-columns} Permutation of columns}

An important step is to reorder the (first $n$) columns of the matrix $\U$ by a permutation $\alpha$. This acts as a graph isomorphism on~$G(\U)$:
\begin{equation}
  \label{eq:permutation-column-on-ket}
  \ket{i} \xmapsto\alpha \ket{\alpha^{-1}(i)}.
\end{equation}
This relabels columns without changing the connectivity structure, so we implement it on a tree decomposition again by just relabeling columns as in~(\ref{eq:permutation-column-on-ket}). The rest of the tree structure \emph{stays the same}.
Note that we conventionally act by $\alpha^{-1}$ within the column label, in order to address the matrix elements as $\U_{i,\alpha(j)}$.
Formal details can be found in \Cref{sec:app:permutations-columns}.

\subsubsection{\label{new:sec:submatrices} Submatrices}

In all of the Clifford and Clifford algorithms, we abundantly need to compute permanents of various submatrices. We need a way to exploit the known tree decomposition of the original matrix.

For any tree decomposition $\mathrm{T}$ of the matrix $\U$, and subsets $\Rows' \subseteq \Rows$ of rows and $\Cols' \subseteq \Cols$ of columns, selecting a submatrix $\restr{\U}{\Rows', \Cols'}$, we construct the \emph{restriction} of $\mathrm{T}$, denoted $\restr{\mathrm{T}}{\Rows', \Cols'}$, as the tree decomposition built on the same tree, where we simply delete all row and column labels not present in $\Rows'$ and $\Cols'$. We show an example in \Cref{fig:skipping-nodes-original,fig:skipping-nodes-restriction}. As seen in \Cref{sec:appendix-restr-tree-decomp}, this is a well-defined tree decomposition that indeed corresponds to the submatrix $\restr\U{\Rows', \Cols'}$. Importantly, this cannot increase the treewidth, bounded (in Lemma~\ref{lemma:treewidth-of-restriction}) as:
\begin{equation}
  \label{new:eq:width-of-decomposition}
  \tw\bigl( \restr{\mathrm{T}}{\Rows', \Cols'} \bigr)
  \le \min\bigl( \tw(\mathrm{T}), |\Rows'| + |\Cols'| - 1 \bigr).
\end{equation}

\subsubsection{\label{new:sec:redundancy-nodes}Redundancy of nodes}

\begin{figure}[t]
  \centering
  \subfloat[\label{fig:skipping-nodes-original}]{\includegraphics[scale=1]{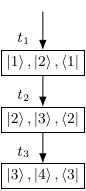}}
  \qquad
  \subfloat[\label{fig:skipping-nodes-restriction}]{\includegraphics[scale=1]{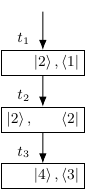}}
  \qquad
  \subfloat[\label{fig:skipping-nodes-redundant}]{\includegraphics[scale=1]{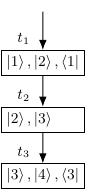}}
  \qquad
  \subfloat[\label{fig:skipping-nodes-skipped}]{\includegraphics[scale=1]{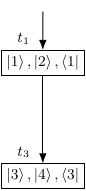}}
  \caption{Examples of tree restrictions: \textbf{(a)}~A decomposition $\mathrm{T} = \mathrm{T^C}$.
    \textbf{(b)}~A~restriction $\restr{\mathrm{T}}{\Rows', \Cols'}$ to rows $\Rows' = \{ \ket 2, \ket 4 \}$ and columns $\Cols' = \{ \bra 1, \bra 2, \bra 3 \}$. Vertices $\ket 1$ and $\ket 3$ are removed.
    \textbf{(c)}~A different restriction, this time only removing the column $\bra 2$, making the node $t_2$ redundant. It cannot contribute any new elements from the matrix. Instead, node $t_2$ is only used to make some values from table $P[t_3]$ available to $t_1$. We can thus remove $t_2$ and connect $t_1$ directly to $t_3$ to obtain the equivalent decomposition $\mathrm{T} \setminus t_2$ in~\textbf{(d)}.}
  \label{fig:skipping-nodes}
\end{figure}

Each node of our tree decomposition $\mathrm{T^C}$ contains a single column label. When restricting the decomposition, we may end up with a node $t \in T$ without any columns, i.e. $\kappa'(t) = \varnothing$. Indeed, our algorithm will perform such restrictions extensively.
In this case, this node contains no information (corresponds to a column no longer present), and we can remove it, obtaining a new decomposition $\mathrm{T} \setminus t$ where we directly connects the neighbours of $t$ (unless it is root or leaf), see \cref{fig:skipping-nodes-redundant,fig:skipping-nodes-skipped}. We expand on the formal details in \Cref{sec:redundancy-of-nodes}.

\section{\label{sec:algorithm}Algorithm for shallow circuits}

Given the above methods, we could drop-in replace all permanent computation in CC-C by Algorithm \ref{alg:cifuentes} to obtain a running time of $\cO(mn^22^\omega \omega^2)$. A similar method was used before in~\cite{Oh2021}. In what follows, however, we show how to achieve $\cO(n^22^\omega \omega^2) + \cO(\omega n^3)$ by finding an efficient way of computing the $k+1$ permanents of submatrices $\W_{\diamond j}$ from (\ref{new:eq:laplace-expansion-per}) using a single tree decomposition, leading to a scaling linear in $k$, reproducing the saving obtained in the original proposal of Clifford and Clifford~\cite{Cliffords2018}. The \emph{key tool} is to compute the $k+1$ permanents of submatrices $\W_{\diamond j}$ \emph{in the right order} by doing \emph{local updates} to the single initial (restricted) tree decomposition. This will allow us to \emph{exploit precomputed tables}, reducing the amount of required computation at each step, closely mimicking the idea of CC-C. Note that in order to reuse a global tree decomposition and tables computed on it, we require the non-collision setting of $m = \Omega(n^2)$.\footnote{This is because a collision leads to copies of an output vertex in $G(\widetilde\V^{(r_1,\dots,r_k), \alpha([k])})$, along with copies of edges. Generally, this introduces new cycles, invalidating the global tree decomposition.}

\subsection{\label{sec:preparation}Preparation}

We first prepare the main tree decomposition \mbox{$\mathrm{T} = \mathrm{T^C}$} of the graph $G(\U)$ of the banded unitary $\U$. Recall that each tree node contains a single column label (input of the circuit) and all row labels (outputs of the circuit) for which that column may have a non-zero element in~$\U$; see \Cref{sec:tree-decomp-band}.

We have a fixed pattern of input photons to the circuit $\ket{\vec n}$ with $n_i \in \{ 0, 1 \}$ for each input mode $i$. We restrict the tree decomposition $\mathrm{T}$ by removing the columns corresponding to unoccupied input modes, as well as the corresponding tree nodes; see \Cref{new:sec:submatrices,new:sec:redundancy-nodes}. We also remove those rows that only contain zeros after the deletion of columns.
The new decomposition $\tilde{\mathrm{T}}$ corresponds to the submatrix of $\U$, denoted $\V^{\vec n}$, restricted to the columns corresponding to occupied inputs.
The matrix $\V^{\vec n}$ may apparently have different band structure, but this is not a problem: the original structure is remembered by the restricted tree-decomposition whose treewidth remains at most $\omega = \tw \mathrm{T}$ by (\ref{new:eq:width-of-decomposition}).

\subsubsection{\label{sec:permutation-of-inputs}Permutation of inputs}

Next, we generate a uniformly random permutation $\alpha \in \Sg_n$ that we use to permute the columns by relabeling $\bra i$ to $\nebra{\alpha^{-1}(i)}$ in $\tilde{\mathrm{T}}$ as shown in (\ref{eq:permutation-column-on-ket}). This preserves all tree decomposition structure while ensuring that the new decomposition correctly describes the permuted matrix.
Note that after the permutation, the column label $\bra i$ is found in node $t_{\alpha(i)}$ of the tree.

\subsubsection{\label{sec:prec-q-tabl}Precomputing $Q$-tables}

We compute the $Q$ tables of the entire tree for later use. As each table $Q[t]$ is local to the node $t$, each containing a single column, this consists of permanents of $1 \times 1$ matrices, i.e. copies of components of $\V^{\vec n}$.\footnote{We need these tables for consistency with the formulation of the Cifuentes and Parrilo algorithm, though in practice, one may choose to refer directly to the matrix.}

\subsection{\label{sec:sampling-algorithm} The sampling algorithm}

Now we perform the sampling itself. We show first the special case of the first marginal and then generalize, showing that not only can we efficiently compute the permanents needed for the Laplace expansion, but by doing our tree manipulations in the right order, we achieve the ability to \emph{reuse almost all dynamic programming tables} for several calculations.

\subsubsection{\label{sec:first-marginal}First marginal}

The first marginal can be seen as expanding the empty sample to $k = 1$ photon. The Laplace expansion is trivial as the matrix
$\W_{\diamond 1}$ is empty. Equivalently, we compute the permanents of $1 \times 1$ matrices. The probability of the first photon being in output mode~$i$ is $p_1(i) = |\W^{(i)}_{1,1}|^2$, where the subscript on $p_1$ indicates that this is the first marginal. This probability is the modulus squared of:
\begin{equation}
  \label{eq:first-marginal-W}
  \W^{(i)}_{1,1} = \widetilde\V^{(i), (\alpha(1))}_{1,1} = \U_{i,\alpha(1)},
\end{equation}
where we note the use of notation referring to the original matrices (without column permutation), and write $\alpha$ explicitly for clarity.
The matrix element $\U_{i,\alpha(1)}$ is represented by the node $t_{\alpha(1)}$ containing the label $\bra 1$. We remark that one does not need to explore all $m$ output modes, as at most $\omega$ (contained in $\rho(t_{\alpha(1)})$) are connected to the input mode $\alpha(1)$ due to the circuit depth. We select one output mode at random, using the computed marginal probability distribution $p_1$, and store it in the register~$r_1$.

\subsubsection{\label{sec:generic-marginal}The $k$-th marginal}

The $k$-th marginal, for $k>1$, adds the $k$-th input photon to the sample of $k-1$ \emph{distinct} photon modes $(r_1, \dots, r_{k-1})$ which we have already obtained in previous steps. The latter determine the matrix $\W = \widetilde\V^{(r_1, \dots, r_{k-1}), \alpha([k])}$ made of the restriction of $\V^{\vec n}$ to rows $(r_1, \dots, r_{k-1})$, and to columns $\alpha([k])$.
The $k$-th marginal adds one more output photon, leading to $k \times k$ matrices $\W^{(r_{k})}$, for all $r_{k} \in [m] \setminus \{ r_1, \dots, r_{k-1} \}$. Recall that $\W^{(r_{k})}$ is the matrix obtained from $\W$ by adding a row $r_{k}$ of $\V^{\vec n}$, restricted to the same columns.

\begin{figure}[tb!]
  \centering
  \subfloat[\label{fig:tree-rotations:original}]{\includegraphics[scale=1]{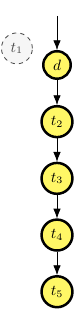}}
  \qquad
  \subfloat[\label{fig:tree-rotations:R1}]{\includegraphics[scale=1]{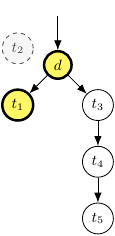}}
  \qquad
  \subfloat[\label{fig:tree-rotations:R2}]{\includegraphics[scale=1]{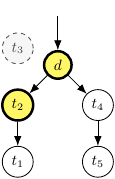}}
  \qquad
  \subfloat[\label{fig:tree-rotations:R3}]{\includegraphics[scale=1]{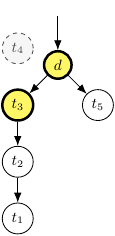}}
  \qquad
  \subfloat[\label{fig:tree-rotations:R4}]{\includegraphics[scale=1]{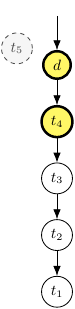}}
  \definecolor{goldenrod}{rgb}{0.85, 0.65, 0.13}
  \caption{Computation of $\per\W^{(r_{5})}$ using the Laplace expansion:
    \captionitem{a}~We start with $\mathrm{T}^\W_{\diamond 1}$ corresponding to~$\per \W_{\diamond \alpha^{-1}(1)}$. To compute this, replace the node $t_1$ by a dummy $d$ and run algorithm~\ref{alg:cifuentes}. The $P$ and $Q$ tables of all nodes need to be computed, which is shown by \textbf{\color{goldenrod}yellow} filling and \textbf{bold} outline.
    \captionitem{b}~Next, put $t_1$ back, and replace $t_2$ by a new dummy $d$ that is now the parent of both $t_1$, and of the subtree of $t_3$, obtaining $\mathrm{T}^\W_{\diamond2}$ needed to compute~$\per \W_{\diamond \alpha^{-1}(2)}$. All $Q[t_1], \dots, Q[t_5]$ tables have been computed and remembered in step (a), so we do not recompute them. Observe that each of $t_3, \dots, t_5$ has the same child (if any) as in step (a), so their $P$ tables do not need to be recomputed either. Marked in yellow are $t_1$ and $d$, the \emph{only} nodes for which we need to compute \emph{new} tables: only $P[t_1]$ for $t_1$, and both $Q[d]$ and $P[d]$ for~$d$.
    \captionitem{c,~d,~e}~The following steps, corresponding to $j=3,4,5$, respectively. In each, we mark yellow the nodes for which we need to compute the dynamic programming tables. At step $j$, the tables computed are $P[t_{j-1}]$, $Q[d]$ and $P[d]$. The latter contains $\per \W_{\diamond \alpha^{-1}(j)}$ in $P[d](\rho'(d), \varnothing)$.
    Observe for example in (d) that nodes $t_1$ and $t_2$ are not yellow: following the same idea as before, their children (and hence $P$ tables) have not changed from the previous step~(c).
  }\label{fig:tree-rotations}
\end{figure}

The matrix $\W$ is represented by the restriction $\mathrm{T}^\W$ of $\mathrm{\tilde T}$ that removes all column labels other than $\bra{1}, \dots, \bra{k}$ which are found in tree nodes $t_{\alpha(1)}, \dots, t_{\alpha(k)}$, respectively. We also need to remove all output labels except $\ket{r_1}, \dots, \ket{r_{k-1}}$.
Note that all nodes except for $t_{\alpha(1)}, \dots, t_{\alpha(k)}$ are redundant in this restriction, so we can skip them in the permanent computations. Equivalently, we delete the redundant nodes while keeping the remaining nodes connected, as in \cref{fig:skipping-nodes-skipped}; see \cref{new:sec:redundancy-nodes}.

The (unnormalized) marginal probability distribution $p_k$ of the $k$-th photon has values $p_k(r_k) = |\per\W^{(r_k)}|^2$. We compute $\per\W^{(r_k)}$ using the Laplace expansion from eq.~(\ref{new:eq:laplace-expansion-per}):
\[
  \per \W^{(r_k)} = \sum_{j = 1}^{k} \U_{r_{k}, \alpha(j)} \, \per \W_{\diamond j}.
\]
We compute the collection $\{ \per \W_{\diamond j} \}_{j=1}^k$ of permanents of the shared submatrices using a collection of closely related tree restrictions $\{ \mathrm{T}^\W_{\diamond j} \}_{j=1}^k$. Each $\mathrm{T}^\W_{\diamond j}$ is the restriction of $\tilde{\mathrm{T}}$, the global decomposition of $\V^{\vec n}$, to rows $\{ r_1, \dots, r_{k-1} \}$ and columns $\alpha([k]) \setminus \{ j \}$, that is, $\mathrm{T}^\W_{\diamond j}$ is the tree decomposition representing the matrix~$\W_{\diamond j}$.

\subsubsection{\label{sec:moving-head}Moving machine head: local manipulations $\mathrm{T}^\W_{\diamond j}$}

We could already compute the Laplace expansion using $\{ \mathrm{T}^\W_{\diamond j} \}_{j=1}^k$, evaluating them in some arbitrary order. However, if we view this as a sequence of local manipulations to the overarching tree decomposition $\mathrm{T}^\W$  corresponding to $\W$, and if we do these in an appropriate order, we unlock a further advantage of reusing the dynamic programming tables between the individual permanents.

The idea, as shown in an example in \Cref{fig:tree-rotations} where each subfigure corresponds to one step, is to imagine a machine head walking sequentially along the nodes of $\mathrm{T}^\W$ from one end to the other, in analogy to a Turing machine walking the tape. At each step, it temporarily replaces the node where the head is, say $t$, by a \emph{dummy} node $d$ that contains no new information, and sets this dummy $d$ to be the root of the tree. Then it computes the permanent from this tree using the algorithm CP. Finally, it restores the original node $t$ and moves on to the next one, repeating the procedure.
We elaborate on the details in three conceptual parts:

\paragraph{Order of manipulations}
Let $(j_1, \dots, j_k)$ be some permutation of $[k]$ that determines the order in which we compute the permanents, i.e. we start with $\per \W_{\diamond j_1}$, continue with $\per \W_{\diamond j_2}$, etc. To minimize the set of $P$-tables changing between steps, choose the values $j_1,\dots, j_k$ so that we manipulate the nodes $t_{\alpha(j_1)}, \dots, t_{\alpha(j_k)}$ in the order as they appear in the tree of $\mathrm{T}^\W$, i.e. $t_{\alpha(j_\ell)} = t_\ell$. Thus $j_\ell = \alpha^{-1}(\ell)$ for all~$\ell$.\footnote{We may also choose to walk the tree in the opposite direction. This does not make a difference.}
Note that to simplify notation in what follows and in \cref{fig:tree-rotations}, we number the nodes of $\mathrm{T}^\W$ consecutively as  $t_1, t_2, \dots, t_k$, ignoring deleted nodes from $\tilde{\mathrm{T}}$ where the numbering (but not order!) may differ.

\paragraph{Node replacement}
To compute $\per \W_{\diamond j}$, we need to remove the column $\bra{j}$ and thus node $t_{\alpha(j)}$ of $\mathrm{T}^\W$, obtaining $\mathrm{T}^\W_{\diamond j}$. We will need to consider the part of the tree to the left and right of $t_{\alpha(j)}$, so conceptually it is useful to create a \emph{dummy} node containing no information instead of connecting the neighbours of $t_{\alpha(j)}$ directly. To satisfy Axiom~\ref{def:treedec-axiom-subtree}, $\rho'(d) = \rho'(t_a) \cap \rho'(t_b)$ if $d$ has neighbours $t_a$ and $t_b$, and $\rho'(d) = \varnothing$ if $d$ is at either end of the tree, where $\rho'$ is the row function of $\mathrm{T}^W_{\diamond j}$. It contains no columns. After computing $\per \W_{\diamond j}$ we return the node $t_{\alpha(j)}$, and move on to the next node with a \emph{new} dummy.

From now on, it will be useful to refer to nodes by their order in the tree as $t_j$, instead of $t_{\alpha(j)}$ used above to more easily address matrix columns. Observe that $t_j$ corresponds to column $\nebra{\alpha^{-1}(j)}$ and removing it allows computing $\per \W_{\diamond\alpha^{-1}(j)}$

\paragraph{\label{sec:head-root}Setting the root}
Finally, at each step we make the \emph{new} dummy $d$ the root of the tree by orienting edges away from it. This means the permanents computed will be found in
$P[d](\rho'(d), \varnothing)$.
Together with the correct order of permanents, this allows us to \emph{reuse previously computed $P$-tables in subsequent steps}. We show the idea in \Cref{fig:tree-rotations}, where we highlight the nodes for which we need to compute new $P$-tables. Recall that $Q$-tables stay the same (except for the new dummy node).

Using the example, observe that it is only in the first step ($j = 1$, \cref{fig:tree-rotations:original}) where we need to compute all of the $P$-tables. Going to step $j=2$ (\cref{fig:tree-rotations:R1}), neither the node $t_3$ nor its subtree $T_{t_3}$ changed, and their $P$-tables stay the same. We must (re)compute only the tables $P[t_1]$ whose subtree $T_{t_1}$ changed, and $Q[d]$ along with $P[d]$ corresponding to a new node.
Observe that when going to step $j=3$ (\cref{fig:tree-rotations:R2}), the contents and subtree of $t_1$ stay the same. This generalizes to the left subtree $t_1, \dots, t_{j-2}$ staying the same for every step $j>2$, requiring no new table computation. We thus conclude that in every step $j > 1$, we need \emph{only compute at most three new tables}.\footnote{Only one table if, in each step $j$, instead of adding a dummy $d$, we only remove $t_j$ and root the tree at $t_{j-1}$.}

\subsubsection{Post-processing}

Finally, having generated an outcome $\vec r = (r_1, \dots, r_n)$ of size $n$, we follow the Clifford and Clifford~\cite{Cliffords2018} algorithms: forget the identity of photons, and obtain the sample $\vec n'$ (vector of mode occupations) by counting how many times each outcome mode appears in $\vec r$.

\subsection{\label{sec:runn-time-asympt} Running time asymptotic analysis}

To compute the running time, we note first that the computation of a $Q$-table takes time $\cO(2^\omega \omega^2)$, and that of a $P$-table of a node with $c$ children takes time $\cO(c\, 2^\omega \omega^2)$~\cite{Cifuentes2015}. We derive these results in \Cref{app:CP-running-time}, specifically Lemma~\ref{lem:app:runtime-Q} and Lemma~\ref{lem:app:runtime-P}, respectively. Both of these bounds depend on the size of the contents of the corresponding tree node, which is bounded by the treewidth $\omega$ of $\mathrm{T^C}$. Even though we work ubiquitously with restrictions which may decrease treewidth, we still keep this general bound on the treewidth for simplicity.

In the following, we focus on the sampling of marginals $k=2,\dots,n$ which subsumes the preparation, sampling the first marginal ($k=1$), as well as the postprocessing.

To prepare for the Laplace expansion~(\ref{new:eq:laplace-expansion-per}) with a given output $r_k$, we iterate through the tree as described in the \Cref{sec:moving-head}, computing $k$ permanents.
For the first step $j=1$, we compute the $P$-tables for all $k$ nodes which have each at most one child, taking time $\cO(k 2^\omega \omega^2)$.
Then at each step $j>1$, we only compute up to three new dynamic programming tables $P[t_{j-1}]$, $Q[d]$, and $P[d]$, where $t_{j-1}$ has at most one child, and $d$ has at most two. Step $j>1$ thus takes time $\cO(2^\omega \omega^2)$.
In total, the computation of the $k$ permanents needed in Laplace expansion takes the time of $\cO(k2^\omega \omega^2)$.

Sampling $r_{k}$ takes time $\cO(k^2 \omega)$ as the Laplace expansion needs $k$ scalar multiplications per possible output, and there are at most $k \omega$ non-zero outputs. Here, we recall that each column has at most $\omega$ rows with nonzero elements, and at this $k$-th step, we have access to $k$ of those. Note also that this is an upper bound: some columns may be overlapping, in which case the number of possible values of $r_{k}$ will be lower.

The loop that adds a new photon runs for $k$ in range from $2$ to $n$. Putting this all together, generating a single sample using our algorithm needs the following asymptotic running time:
\begin{equation*}
  \sum_{k=2}^n \cO(k2^\omega \omega^2) + \cO(k^2 \omega)
  = \cO(n^22^\omega \omega^2) + \cO(\omega n^3).
\end{equation*}
Remark that $\cO(\omega n^3)$ remains better than $\cO(m n^2)$ while $\omega n < m$.
If $\omega n > m$ it is easy to see that by design of our algorithm we recover
the scaling $\cO(m n^2)$.

For completeness, we shortly remark that the tree manipulations, i.e. column permutations and restrictions of tree decompositions, which consist of computing set intersections, are subsumed by the above complexity.

\section{\label{sec:concl-future-work}Conclusion and future work}

We have presented an algorithm that generates samples of $n$ photons from a Boson Sampling experiment on a shallow circuit in the no-collision regime $m = \Omega(n^2)$, a necessary constraint in our algorithm. Combining the central idea of the Boson Sampling simulation work of Clifford and Clifford and the computation of permanent of matrices of with fixed treewitdth of Cifuentes and Parrilo, we have constructed an algorithm that generates a sample of $n$ photons from a Boson Sampling experiment in an $m$-mode circuit described by a matrix of treewidth $\omega$ that runs in time $\cO(n^22^\omega \omega^2) + \cO(\omega n^3)$.
For the case of shallow circuit, with depth $D = c \log n$,  the sampling algorithm becomes polynomial time $\cO(n^{2(c+1)}\log^2 n) + \cO(n^3 \log n)$, as suggested in \cite{Garcia-Patron2019}.

A straightforward use of the Cifuentes and Parrilo result leads to a scaling of $\cO(m n^2 2^\omega
\omega^2)$ or $\cO(n^3 2^\omega \omega^2)$. The former result was attained in \cite{Oh2021} by replacing the permanent computations by the algorithm CP, but without optimizing the Laplace expansion on the tree decomposition.
Our key technical improvement is to restore the idea of Clifford and Clifford to decouple the dominant running time term from the number of output modes $m$
using the Laplace expansion and precomputing submatrices~$\W_{\diamond j}$
in the same running time as needed to compute a permanent of size~$k$. We adapt this idea to the tree decomposition framework of Cifuentes and Parrilo by performing a series of local updates to single nodes that allows us to compute $k$ permanents of subtrees in time comparable to the computation of the permanent of the full tree.

In the regime $\omega=\cO(n)$ our sampling algorithm has an extra polynomial cost with respect
to the traditional Clifford and Clifford algorithm. An interesting open question would be to find an algorithm that has better scaling than ours and that converges to the exact running time
$\cO(n2^n)$ when $\omega=n$. Another interesting open problem would be to generalize our result to the collision scenario, where multiple photons can be measured in the same mode. On face value, the latter complicates the use of tree decompositions, because such collisions cause the copying of rows of the matrix whose permanent represents the amplitude, creating a new cycle in its graph.
Thus we expect that one would likely need to construct entirely new tree decompositions in such cases.

Finally, it would be interesting to see whether our ideas can be adapted to circuits that allow nonlocal operations, such as the scheme of~\cite{go_ExploringShallowDepthBoson_2024}. These can obtain a dense unitary with only logarithmic circuit depth, posing a seemingly difficult challenge for tree-decomposition based methods that rely on matrix sparsity.
Perhaps if one were able to reintroduce sparsity, for example by exploiting loss to truncate small matrix entries, one may be able to find a good tree decomposition. In such case, it should be possible to either adapt our method, or perhaps use our work as a toolkit to develop a different approach to tackle this problem.

\begin{acknowledgments}
  This work is partially based on work done by S.N. as his 2022 undergraduate final project~\cite{dissertation} supervised by \mbox{R.G.-P.}, and continued as an LFCS internship over the summer 2022. R.G.-P. was supported by the EPSRC-funded project Benchmarking Quantum Advantage. S.N. thanks William Clements for helpful discussions.
\end{acknowledgments}

\bibliography{bibliography}
\appendix
\section{\label{sec:boson-sampling}Boson Sampling}

Here, we expand the details on quantum optics and Boson Sampling omitted in \Cref{sec:bs-cc}.
Recall that in an ideal Boson Sampling circuit, the input and output can be written in a basis of quantum states corresponding to the location of $n$ photons in $m$ modes, represented as \emph{number (Fock) states} $\ket{\vec n} = \ket{n_1, \dots, n_m} \in (\C^n)^{\otimes m}$, where $n_i \in \N$ is the number of photons in mode $i$, and $\sum_i n_i = n$. These states are canonically written using the creation operators as
\begin{equation*}
  \ket{\vec n} = \prod_{i=1}^m \frac{(\hat a_i^\dagger)^{n_i}}{\sqrt{n_i!}} \; \ket{0}^{\otimes m},
\end{equation*}
where $\hat a_i^\dagger$ acts only on mode $i$. An \emph{interferometer} is a linear optical device that is represented by a unitary matrix $\U \in \mathbb{U}(m)$ acting linearly on the $m$ creation operators as
\begin{equation}
  \label{eq:interferometer-action-on-operators}
  \hat a_i^\dagger \xmapsto\U \sum_{j=1}^m \U_{i,j} \hat a_j^\dagger
  \qquad
  \forall i = 1, \dots, m.
\end{equation}
The unitarity of $\U$ (i.e. $\U^\dagger \U = \U \U^\dagger = \I$) ensures the total number of photons in preserved \cite{kok2010}.

\subsection{\label{sec:1stquant}First quantization representation}

We now reformulate Boson Sampling in first quantization, used in \Cref{sec:intro-bs}.
For a given occupation state $\ket{\vec n}$, we start by writing an~$n$-qudit state where each $m$-dimensional qudit represents a photon in a specified mode, and these are arranged in a non-decreasing order:
\begin{equation}
  \label{eq:nondecreasing-qudit-product}
  \ket{\vec z}
  \deq
  \ket1^{\otimes n_1} \otimes \ket2^{\otimes n_2} \otimes \cdots \otimes \ket{m}^{\otimes n_m}.
\end{equation}
Any permutation $\sigma$ of the $n$ qudits will provide a new basis state $\ket{\vec r}=\sigma\ket{\vec z}$ that has the same mode occupation pattern as $\ket{\vec z}$, given by $\ket{\vec n}$. The states corresponding to the same number state are considered equivalent, written $\ket{\vec z} \sim \ket{\vec r}$.
The size of each equivalence class of states $\ket{\vec r}$ with the same photon distribution as $\ket{\vec z}$ is given by
\begin{equation}
  \label{eq:class-size-of-z-r}
  \binom{n}{n_1, \dots, n_m} = \frac{n!}{\prod_{i=1}^m n_i!}.
\end{equation}
In the notation of $\ket{\vec z}$ and $\ket{\vec r}$, the qudits are assigned specific locations and are therefore distinguishable. To ensure indistinguishability, the state $\ket{\Phi_{\vec n}}$ corresponds to the symmetric superposition of all their permutations:
\begin{equation}
  \label{eq:qudit-representation}
   \ket{\Phi_{\vec n}} =
  \frac{1}{\sqrt{n! \prod_{i=1}^m n_i!}} \sum_{\sigma \in \Sg_n} \sigma \ket{\vec z},
\end{equation}
where $\sigma$ is a permutation in the symmetric group on $n$ elements  $\Sg_n$, and the prefactor on the RHS is its normalization~\cite{Moylett2018}. Note this is eq.~\eqref{new:eq:qudit-representation}.

The action of the interferometer $\U$ in the first-quantization representation follows from (\ref{eq:interferometer-action-on-operators}). For a single qudit $\ket i$, the action reads:
\begin{equation}
  \ket{i} \xmapsto\U \sum_{j=1}^m \U_{i,j} \ket{j} = \U \ket{i},
\end{equation}
which results from its action on $\hat a^\dagger_i$ and the fact that
$|i\rangle=\hat{a}^\dagger_i|0\rangle$.
Likewise, product states correspond to products of creation operators,which give the relation
\begin{equation}
  \label{eq:U-action-on-z}
  \ket{\vec z} \equiv \bigotimes_{k=1}^n \ket{z_k}
  \xmapsto{\U}
  \bigotimes_{k=1}^n (\U \ket{z_k}) = \U^{\otimes n} \ket{\vec z},
\end{equation}
where we have introduced the notation $\ket{z_j}$ to mean the $j$-th qudit in the product \eqref{eq:nondecreasing-qudit-product}.
By linearity we obtain that $\ket{\Phi_{\vec n}} \mapsto \U^{\otimes n} \ket{\Phi_{\vec n}}$,
i.e. in the first quantization representation, the linear optical circuit unitary over $n$
photons acts as $\U^{\otimes n}$.

\subsection{\label{sec:boson-sampl-outcome-probabilities}Boson Sampling outcome probabilities}

We fill in the derivation behind equation (\ref{eq:outcome-probability-permanent-expanded}), which follows the steps in \cite{Moylett2018}.
We represent $\ket{\vec n} \cong \neket{\Phi_{\vec n}}$ canonically by $\ket{\vec z}$ defined in (\ref{eq:nondecreasing-qudit-product}), we analogously represent the output state $\ket{\vec n'}$ in qudit space by $\ket{\vec z'}$, again with tensor factors ordered in non-decreasing order. We denoted $\ket{z_j}$ the $j^{\mathrm{th}}$ photon (tensor factor) of $\ket{\vec z}$, and we define $\ket{z'_j}$ for $\ket{\vec z'}$ analogously.

Recall from (\ref{eq:U-action-on-z}) that in the qudit representation, the interferometer $\U$ acts as the $n$-fold tensor product $\U^{\otimes n}$. The probability of measuring the output state $\ket{\vec{n}'}$ is given by the Born rule $\Pr\left[ \vec{n}' \middle| \vec{n} \right]  = \left| \braoket{\vec{n}'}{\U}{\vec{n}} \right|^2$. We write this in the qudit representation:
\begin{widetext}
  \begin{align*}
    \Pr \left[ \vec{n}' \middle| \vec{n} \right]
    & \overset{\hphantom{(A4)}}{=} \left| \braoket{\Phi_{\vec n'}}{\: \U^{\otimes n}}{\Phi_{\vec n} \:} \right|^2 \\
    & \overset{(\ref{eq:qudit-representation})}{=} \left|
      \left(
      \frac{1}{\sqrt{n! \prod_{i=1}^m n'_i!}}
      \sum_{\sigma \in \Sg_n} \bra{\vec z'} \sigma^{-1}
      \right)
      \U^{\otimes n}
      \left(
      \frac{1}{\sqrt{n! \prod_{i=1}^m n_i!}}
      \sum_{\tau \in \Sg_n} \tau \ket{\vec z}
      \right) \right|^2 \\
    &
      \overset{\hphantom{(A4)}}{=}
      \left|
      \frac{1}{n! \sqrt{\prod_{i=1}^m n_i! \; n'_i!}}
      \sum_{\sigma, \tau \in \Sg_n}
      \left(
      \bigotimes_{j=1}^n \bra{z'_j}
      \right) \sigma^{-1}
      \U^{\otimes n}
      \tau
      \left(
      \bigotimes_{k=1}^n \ket{z_k}
      \right)
      \right|^2,
  \end{align*}
\end{widetext}
where we expanded the definition of $\ket{\vec n}, \ket{\vec n'}$, rearranged the sums over the permutations~$\sigma$ and~$\tau$, and decomposed $\ket{\vec z}, \ket{\vec z'}$ into their individual qudits. Note that the action of the permutations is by braiding, rearranging tensor factors; this means that the adjoint corresponds to the inverse: \mbox{$\sigma^{\dagger} = \sigma^{-1}$}, as in $\nebra{\Phi_{\vec n'}}$ above.

The permutation $\sigma^{-1}$, acting as a braiding of the tensor factors, commutes with the operator $\U^{\otimes}$, so we can write:
\begin{widetext}
  \begin{align*}
    \Pr \left[ \vec{n}' \middle| \vec{n} \right]
    &
      =
      \left|
      \frac{1}{n! \sqrt{\prod_{i=1}^m n_i! \; n'_i!}}
      \sum_{\sigma, \tau \in \Sg_n}
      \left(
      \bigotimes_{j=1}^n \bra{z'_j}
      \right)
      \U^{\otimes n}
      \underbrace{\sigma^{-1} \tau}_{\zeta}
      \left(
      \bigotimes_{k=1}^n \ket{z_k}
      \right)
      \right|^2 \\
    &
      =
      \left|
      \frac{n!}{n! \sqrt{\prod_{i=1}^m n_i! \; n'_i!}}
      \sum_{\zeta \in \Sg_n}
      \left(
      \bigotimes_{j=1}^n \bra{z'_j}
      \right)
      \U^{\otimes n}
      \left(
      \bigotimes_{k=1}^n \ket{z_{\zeta^{-1}(k)}}
      \right)
      \right|^2
  \end{align*}
\end{widetext}
were we define the composite $\zeta = \sigma^{-1} \tau$ in $\Sg_n$. The composite ranges over the whole group $\Sg_n$, so we remove one summation and get a factor of $n!$ coming from the standard group-theoretic fact that every $\zeta$ is a composite of $n!$ different pairs of $\sigma$ and $\tau$.
Photons in the product $\bigotimes_k \ket{z_k}$ are reordered by $\zeta$: the $k$-th photon is sent to $\zeta(k)$. We rewrite this state with the permutation inside. To find which is the $k$-th factor after the permutation, we apply the inverse $\zeta^{-1}$ inside the state to obtain $\bigotimes_k \ket{z_{\zeta^{-1}(k)}}$.

Finally, we match up the tensor factors across the expression, giving us terms with $j = \zeta^{-1}(k)$. This reduces to a product of bra-kets with $\U$ in the middle. To bring the expression into a form of a permanent, we replace the permutation $\zeta$ by its inverse $\pi = \zeta^{-1}$. This reorders the summands, but the summation commutes, so the value is the same. We obtain:
\begin{align*}
  \Pr \left[ \vec{n}' \middle| \vec{n} \right]
  & =
    \left|
    \frac{1}{\sqrt{\prod_{i=1}^m n_i! \; n'_i!}}
    \sum_{\pi \in \Sg_n} \;
    \prod_{j=1}^{n} \;
    \V^{\vec n', \vec n}_{j, \pi(j)}
    \right|^2 \\
  & =
    \frac{\big| \per \V^{\vec n', \vec n} \big|^2}{\prod_{i=1}^m n_i! \, n'_i!},
\end{align*}
These are equations~(\ref{eq:outcome-probability-permanent-expanded},~\ref{eq:outcome-probability-permanent-general}) in \Cref{sec:outc-prob}. Recall that we define the matrix $\V^{\vec n', \vec n}$ with components
\[
  \V^{\vec n', \vec n}_{l,h} \deq \braoket{z'_l}{\:\U\:}{z_h},
\]
i.e. the matrix elements of $\U$ at row (output) $z'_l$ and column (input) $z_h$.

\section{\label{sec:clifford}Clifford and Clifford algorithms}

In \Cref{new:sec:algorithms-cca-ccb}, we shortly sketched the main ideas of algorithms CC-A and CC-B. We now elaborate, showing the conceptual steps required to construct these algorithms.

\subsection{\label{sec:vari-orig-phot}Variable origins of photons and algorithm CC-A}

The main idea of CC-A is to perform the sampling in the $\neket{\vec r}$ basis and grow the sample photon-by-photon, using a standard chain rule procedure.
The chain rule states that:
\begin{equation}
    p(r_1,\dots,r_n) = p(r_1)p(r_2|r_1) \cdots p(r_n|r_{n-1}, \dots, r_1).
\end{equation}
If one has access to all marginals of $p(r_1,r_2,...,r_n)$, one can generate a sample using the chain rule by first computing $p(r_1)$ for all $r_1 \in [m] = \{ 1,\dots,m \}$, and committing to a choice of $r_1$ following the computed probability distribution. Having fixed $r_1$, we compute $p(r_1, r_2)$ for all $r_2 \in [m]$, using $p(r_2|r_1)=p(r_1,r_2)/p(r_1)$, and again commit to a choice of $r_2$ following that distribution. Here, we use the value $p(r_1)$ computed in the previous step. Iterating this procedure, we can generate one sample via the computation of $mn$ marginals.

\subsubsection{\label{sec:marg-prob-algor-A}Marginal probabilities}

We follow the steps in~\cite{Cliffords2018} to derive the marginal probabilities.
Equation (\ref{eq:outcome-probability-permanent-general}) gives the probability of outcome $\ket{\vec n'}$ for an arbitrary input $\ket{\vec n}$. We follow the convention that input modes are occupied by up to one photon ($n_i \le 1$), and without loss of generality use the standard input $\ket{\vec n} = \ket{1}^{\otimes n} \otimes \ket{0}^{\otimes (m - n)}$, i.e. the photons are found in the first $n$ modes. This allows us to simplify notation in the following, omitting the reference to $\ket{\vec n}$:
\begin{equation*}
  \Pr \left[ \vec{n}' \right]
  =
  \frac{\left| \vphantom{\big|} \per \V^{\vec n', (1^n, 0^{m-n})} \right|^2}{\prod_{i=1}^m n'_i!}
\end{equation*}

Define $\widetilde\V^{(r_1, \dots, r_k), C}$ which fulfils the same function as $\V^{\vec n', \vec n}$ but it is given by listing which modes appear in the qudit representation, and allows partial samples. Its components are:
\[
  \widetilde\V^{(r_1, \dots, r_k), C}_{i,j} = \braoket{r_i}{\U}{c_j},
\]
where the input qudits are given as an (unordered) set $C = \{ c_1, \dots, c_k \}$, so we impose $c_1 \le c_2 \le \cdots \le c_k$ for convenience.
We can choose any ordering here, because the permanent is invariant under permutations of columns, as well as rows.
Note that we always give the output photons $(r_1, \dots, r_k)$ in the order in which they are sampled. The latter ensures that the newest photon, for which we will comput the marginal probability later, corresponds to the last row of the matrix.
Note further that only the presence or absence of a row or column is relevant (or, more generally, its number of repetitions), but we choose an ordering so that we can write down concrete matrices. The set $C$ is the set of input modes, i.e. columns of $\U$, and the (full) output state is represented by $\ket{\vec z'} = \ket{z'_1} \otimes \cdots \otimes \ket{z'_n}$. We write the probability in this notation, and define the pmf $q$:
\begin{equation}
  \label{eq:probability-psiprime-preparation}
  q(\vec z') \deq
  \frac{\left| \per \widetilde\V^{(z'_1, \dots, z'_n), [n]} \right|^2}{\prod_{i=1}^m n'_i!}
  = \Pr \left[ \vec n' \right].
\end{equation}
where we recall the notation $[n] = \{ 1, \dots, n \}$.

By rewriting the probability in terms of qudits, we can define a new pmf where the photon samples are given in the $\ket{\vec r}$ basis. That is, we now work with \emph{distinguishable} photons, specifying which photon from the input ends up in which output mode. We write the new pmf $p$ defined on all $\vec r \in [m]^n$:
\begin{equation}
  \label{eq:def-p}
  p(\vec r) \deq \Pr[\vec r] =
  \frac{\bigl| \per \widetilde\V^{\vec r, [n]} \bigr|^2}{n!}.
\end{equation}
The permanent is invariant under permutation of rows and columns, so $p(\vec r) = p(\vec z')$, where $\vec z'$ are the same modes in non-decreasing order, representing the same number state, i.e. $\ket{\vec r} \sim \ket{\vec z'}$. We recall from (\ref{eq:class-size-of-z-r}) that there are $\frac{n!}{\prod_{i=1}^m n'_i!}$ such vectors $\vec r$. The two pmfs are related as:
\begin{align*}
  \sum_{\vec r \sim \vec z} p(\vec r)
  & \overset{\hphantom{(A3)}}{=} p(\vec z') \sum_{\vec r \sim \vec z'} 1 \\
  & \overset{(\ref{eq:class-size-of-z-r})}{=} \frac{n!}{\prod_{i=1}^m n'_i!} \cdot p(\vec z') \\
  & \overset{(\ref{eq:def-p})}{\underset{}{=}} \frac{\cancel{n!}}{\prod_{i=1}^m n'_i!} \cdot
    \frac{\bigl| \per \widetilde\V^{\vec z', [n]} \bigr|^2}{\cancel{n!}} \\
  & \overset{(\ref{eq:probability-psiprime-preparation})}{=} q(\vec z'),
\end{align*}
which is the result we expect, and we conclude:
\begin{equation}
  \label{eq:p-relation-to-q}
  p(\vec r) = p(\vec z') = \frac{\prod_{i=1}^m n'_i!}{n!} \cdot q(\vec z').
\end{equation}
This allows us to compute $p$ instead of $q$.

Using the qudit representation and working in the $\ket{\vec r}$ basis, accessible thanks to the pmf $p$, allows us to work with partial samples of $k < n$ photons. However, the definition of $\widetilde\V^{(z'_1, \dots, z'_k), C}$ requires us to select which of the incoming photons were measured in the partial sample. The probability of measuring some partial sample $\vec r^{(k)} \deq (r_1,\dots,r_k) \in [m]^k$ of $k$ photons has to sum over the $\binom{n}{k} = \frac{n!}{k! (n-k)!}$ possible subsets of input modes from which these photons came. We arrive at the following:
\begin{lemma}[partial sample]
  \label{lemma:partial-sample-qudits}
  The marginal pmf of a partial sample of $k \in [n]$ photons is
  \begin{equation}
    \label{eq:marginal-probability-rk}
    p\bigl( \vec r^{(k)} \bigr)
    = \frac{(n-k)!}{n!}
    \sum_{\substack{C \subseteq [n] \\ |C| = k}}
    \left|
      \per \widetilde\V^{\vec r^{(k)}, C}
    \right|^2
  \end{equation}
\end{lemma}
\begin{proof}
  To get the marginal pmf, we sum the full pmf over the part of sample $\vec r$ that is not in $\vec r^{(k)}$, which we call $\vec r' = (r_{k+1},\dots,r_{n})$:
  \begin{align*}
    p\bigl( \vec r^{(k)} \bigr)
    & = \sum_{\vec r'} p(\vec r)
      \overset{(\ref{eq:def-p})}{=} \sum_{\vec r'} \frac{\bigl| \per \widetilde\V^{\vec r, [n]} \bigr|^2}{n!} \\
    &
      = \frac{1}{n!} \sum_{\vec r'}
      \left(
      \sum_{\sigma \in \Sg_n}
      \prod_{j=1}^n \U_{r_j, \sigma_j}
      \right)
      \left(
      \sum_{\tau \in \Sg_n}
      \prod_{\ell=1}^n \U_{r_\ell, \tau_\ell}
      \right)^*
  \end{align*}
  where we expanded the definition of permanent, and  expressed the magniture squared \mbox{$|x|^2 = x x^*$} for any $x \in \C$. Recall that $\widetilde\V^{\vec r, [n]}$ is the matrix with rows $r_1, \dots, r_n$ and columns $[n]$ of $\U$. Below, we write $\sigma_j \deq \sigma(j)$, likewise for $\tau$, to shorten the notation.
  \begin{align}
    \nonumber
    p(\vec r^{(k)})
    &
      = \frac{1}{n!}
      \sum_{\vec r'}
      \sum_{\sigma, \tau \in \Sg_n}
      \prod_{j=1}^n
      \U_{r_j,\sigma_j} \U^*_{r_j,\tau_j} \\
    \label{eq:marginals-prob-matrix-products}
    & = \frac{1}{n!}
      \sum_{\vec r'}
      \sum_{\sigma, \tau \in \Sg_n}
      \prod_{j=1}^n
      \U^\dagger_{\tau_j,r_j} \U_{r_j,\sigma_j}
  \end{align}
  We have rearranged the expressions. In (\ref{eq:marginals-prob-matrix-products}), we used the fact that $\U_{r_j,\tau_j}^* = \U^\dagger_{\tau_j,r_j}$. We move the terms not summed over outside of the summation over $\vec r' = (r_{k+1},\dots,r_n)$:
  \begin{widetext}
    \begin{align}
      \nonumber
      p(\vec r^{(k)})
      & \nonumber
        =
        \frac{1}{n!}
        \sum_{\sigma, \tau \in \Sg_n}
        \left(
        \prod_{j=1}^k \U^\dagger_{\tau_j,r_j} \U_{r_j,\sigma_j}
        \right)
        \sum_{\vec r'}
        \prod_{\ell=k+1}^n
        \U^\dagger_{\tau_\ell,r_\ell} \U_{r_\ell,\sigma_\ell} \\
      & \label{eq:exchange-sum-product}
        =
        \frac{1}{n!}
        \sum_{\sigma, \tau \in \Sg_n}
        \left(
        \prod_{j=1}^k \U^\dagger_{\tau_j,r_j} \U_{r_j,\sigma_j}
        \right)
        \prod_{\ell=k+1}^n
        \underbrace{\sum_{r_\ell=1}^m
        \U^\dagger_{\tau_\ell,r_\ell} \U_{r_\ell,\sigma_\ell}}_{[\U^\dagger \U]_{\tau_\ell, \sigma_\ell}}
    \end{align}
  \end{widetext}
  In (\ref{eq:exchange-sum-product}), we exchange the product and the sum by distributivity. Only one variable ($r_\ell$) appears now, but the product ranges over $\ell$, so we still have all the terms. Note that the summation over $r_\ell$ is a matrix multiplication $\U^\dagger \U$. As $\U$ is unitary, this product is the identity. The product over $\ell$ then becomes $\prod_{\ell=k+1}^n \delta_{\tau_\ell, \sigma_\ell}$, where $\delta_{i,j}$ is the Kronecker delta. The only nonzero terms are those where $\sigma_\ell = \tau_\ell$ for all $\ell>k$. We partition the permutations as $\sigma = \mu \oplus \zeta$ and $\tau = \nu \oplus \zeta$ into a shared part $\zeta \in \Sg_{n-k}$ (this is the part that ensures $\sigma_\ell = \tau_\ell$ for $\ell > k$), and $\mu,\nu \in \Sg_k$, which may be different. This means we also partition the set $[n]$ on which $\sigma,\tau$ act into a subset $C$ of size $k$ on which $\mu,\nu$ act, and $[n]\setminus C$ acted upon by~$\zeta$.%
  \footnote{By a partition $\sigma = \mu \oplus \zeta$, we mean that $\sigma_\ell = \mu_\ell$ for $i \le k$ and $\sigma_\ell = \zeta_{\ell-k} + k$ for $\ell > k$.
    Note we abuse notation by saying $\sigma = \mu \oplus \zeta$ and $\tau = \nu \oplus \zeta$, hiding the detail that the output may need to be rearranged by some shared shared $\lambda \in \Sg_n$ as it may not fit into the blocks $[k]$ and $[n-k]$. In reality, we have the composites $\sigma = \lambda (\mu \oplus \zeta)$ and $\tau = \lambda (\nu \oplus \zeta)$. However, these are formal details that would only obscure the meaning of the steps.}

  We rewrite the summation over $\sigma$ and $\tau$ as summations over the selection of $C \subseteq [n]$, i.e. over the partitions, such that we select the part of $[n]$ where $\mu,\nu$ act, and as summations over $\mu,\nu$ themselves. For each $C$, there are $(n-k)!$ possible permutations $\zeta$; this gives a new factor in~(\ref{eq:marginal-almost-there}):
  \begin{widetext}
    \begin{align}
      \label{eq:marginal-almost-there}
      p\bigl( \vec r^{(k)} \bigr)
      & =
      \frac{(n-k)!}{n!}
      \sum_{\substack{C \subseteq [n] \\ |C| = k}} \;
      \underbrace{\left(
        \sum_{\mu \in \Sg_C}
        \prod_{j=1}^k \U_{r_j,\mu_j}
      \right)}_{\per \widetilde\V^{(r_1, \dots, r_k), C}}
      \underbrace{\left(
        \sum_{\nu \in \Sg_C}
        \prod_{\ell=1}^k \U_{r_\ell,\nu_\ell}
        \right)^*}_{\left( \per \widetilde\V^{(r_1, \dots, r_k), C} \right)^*}\\
      \nonumber
      & = \frac{(n-k)!}{n!}
        \sum_{\substack{C \subseteq [n] \\ |C| = k}}
        \left|
        \per \widetilde\V^{\vec r^{(k)}, C}
        \right|^2
    \end{align}
  \end{widetext}
  Thus we have recovered the permanent of submatrices, and this concludes the proof.
\end{proof}

\subsubsection{Running time}

The algorithm CC-A samples from $p(r_1, \dots, r_n)$ by committing to the partial sample $(r_1, \dots, r_{k-1})$ and extending it by a single $r_{k}$ at each step $k=2,\dots,n$. For every $k$ we need to compute $m$ marginals, each consisting of a sum of size $\binom{n}{k}$ and a permanent of a matrix of size $k \times k$ taking the time $\cO(k2^k)$ steps each using the algorithm by Ryser~\cite{MR0150048} or Glynn~\cite{Glynn2010}. This leads to a total running time:
\begin{equation}
  \cO\left( m\sum_{k=1}^n k2^k\binom{n}{k} \right) = \cO( mn3^n ).
\end{equation}
At each step $k$, we need to remember the values $\{ p(r_k | r_1, \dots, r_{k-1}) \}_{r_k=1,\dots,m}$, i.e.~store a vector of $m$ values.
Therefore we require $\cO(m)$ additional memory, less than the memory needed to store the matrix $\U$ of size $\cO(m^2)$.

\subsection{\label{sec:input-perm-trick} Removing the costly sum: algorithm CC-B}

The first improvement of CC-A suggested by Clifford and Clifford~\cite{Cliffords2018}, which we refer to as CC-B, is to
remove the costly sum over $\binom{n}{k}$ terms corresponding to sets $C$ by preselecting them based on a random choice, and only using the inputs (columns) in the chosen $C$.
The idea is to remove the selection from the marginal itself, and instead determine the order in which input photons appear. This order is chosen randomly for each full sample. This means each sample comes from a simpler probability distribution, but the ensemble of many samples follows the correct distribution.

As part of the initialization of the algorithm, i.e. before the sample is generated, choose a uniformly random permutation $\alpha \in \Sg_n$. At each $k$, the set $C$ will be defined to be $C = \alpha([k]) = \{ \alpha(1), \dots, \alpha(k) \}$.
For each $C$ of size $k$, there are $k! (n-k)!$ permutations $\alpha$ such that $C = \alpha([k])$, so we need to divide by this in~(\ref{eq:marginal-probability-rk}) to preserve equality:
\begin{align}
  \nonumber
  p(\vec r^{(k)})
  & =
    \frac{1}{n! \, k!}
    \sum_{\alpha \in \Sg_n}
    \left|
    \per \widetilde\V^{\vec r^{(k)}, \alpha([k])}
    \right|^2 \\
  \label{eq:p-as-E-phi}
  & = \underset{\alpha \in \Sg_n}\Expectation
    \underbrace{\left[
    \frac{1}{k!}
    \left|
    \per \widetilde\V^{\vec r^{(k)}, \alpha([k])}
    \right|^2
    \right]}_{\phi(\vec r^{(k)} | \alpha)}.
\end{align}
We see that the original pmf $p$ is the expectation of a different pmf called $\phi$, taken over the uniformly random $\alpha \in \Sg_n$, where $\phi$ is conditioned on $\alpha$.
The expectation value over many samples, each with a new uniformly random $\alpha$, leads to the correct marginal probability.

For each photon $k$, the $1/k!$ in (\ref{eq:p-as-E-phi}) is a shared constant prefactor for each value of $r_k$. Similarly to CC~\cite{Cliffords2018}, for convenience, we can omit this shared prefactor in the algorithm and sample from unnormalized pmfs. This is possible using the Walker's alias method in time $\cO(m)$ for a pmf stored in an array of length $m$, the latter being the number of possible new modes~\cite{Walker1974NewFM}.\footnote{Precisely, the time per sample is $\cO(1)$ after pre-processing time of $\cO(m)$. In our use case, we need a single sample from each pmf, so we characterize the time as just the total $\cO(m)$.}

\subsubsection{Running time}

The running time of Algorithm CC-B scales as
\begin{equation}
  \cO\left( m\sum_{k=1}^n k2^k \right) = \cO\left( m n\sum_{k=1}^n 2^k \right)
  = \cO(mn2^{n+1}).
\end{equation}
Hence CC-B generates a sample in time $\cO(mn2^n)$. Similarly to CC-A, it requires additional space $\cO(m)$, where $m$ is the number of modes.

\subsection{\label{app:laplace-expansion}Laplace expansion: algorithm CC-C}

In \Cref{sec:lapl-expans-algor}, we describe the Laplace expansion giving the algorithm CC-C, the final improvement of~\cite{Cliffords2018}. For convenience, we give in \Cref{fig:CC-C-float} a pseudocode that summarizes the steps.

\begin{figure}
\begin{algorithm}[H]
  \SetAlgoRefName{CC-C}
  \caption{Boson Sampling}
  \label{alg:cliffords-c}
  \KwIn{
    \begin{itemize}
    \item number of modes $m$, and of photons $n$, $n \le m$
    \item matrix $\U \in \mathbb{U}(m)$ representing the interferometer
    \end{itemize}
  }
  \KwResult{a single sample $\vec z'$}

  $\alpha \gets$ uniformly random permutation from $\Sg_n$ \;

  $\V^{\vec n} \gets$ first $n$ columns of $\U$ permuted using $\alpha$ \;

  initialize $\vec r = (r_1, \dots, r_n) \gets (0, \dots, 0)$ \;

  $w_1(i) \gets |\V^{\vec n}_{i, 1}|^2$ for all $i \in [m]$   \tcp*{first photon}

  $r_1 \gets$ sample from the pmf $w_1$ \;

  \For(\tcp*[f]{rest of the sample}){$k \gets 1, \dots, n-1$}{

    $\W \gets$ rows $r_1, \dots, r_{k}$ and first $k+1$ columns of $\V^{\vec n}$
    \tcp*{corresponging to $\widetilde \V^{(r_1,\dots,r_{k}), \alpha([k+1])}$}

    compute $\{ \per \W_{\diamond i} \}_{i=1}^{k+1}$

    $w_{k+1}(i) \gets \left| \sum_{j=1}^{k+1} \V^{\vec n}_{i, j} \per \W_{\diamond j} \right|^2$ for all $i \in [m]$

    $r_{k+1} \gets$ sample from the (unnormalised) pmf $w_{k+1}$
  }

  $\vec z' \gets$ sort $\vec r$ in non-decreasing order

  \KwRet{$\vec z'$}, which represents $\ket{\Phi_{\vec n'}} = \frac{1}{({\cdots})} \sum_{\sigma \in \Sg_n} \sigma \ket{\vec z'}$
\end{algorithm}
\caption{Pseudocode of algorithm CC-C. Note the differring convention with respect to the main text where we expand from a sample of $k-1$ to $k$ photons. In the present appendix, for notational simplicity, we expand from $k$ to $k+1$ photons.}
\label{fig:CC-C-float}
\end{figure}

\subsubsection{\label{app:CC-C-runtime}Running time}

The running time of CC-C is:
\begin{subequations}
  \begin{align}
    \nonumber
    \sum_{k=1}^n & \cO(k2^k) + \cO(mk) \\
    \label{eq:bound-k-by-n}
    & = \cO\left( n\sum_{k=1}^n 2^k \right) + \cO\left( m \sum_{k=1}^n k \right) \\
    \label{eq:running-time-of-CC-C}
    & = \cO\left( n2^{n+1} \right) + \cO\left( m n^2 \right).
  \end{align}
\end{subequations}
In (\ref{eq:bound-k-by-n}), we use the fact that $k \le n$ to bound $k 2^k = \cO(n 2^n)$. The summation $\sum_{k=1}^n 2^k = 2 (2^n - 1)$, and together these give the term $\cO(n2^{n+1})$ in (\ref{eq:running-time-of-CC-C}).
Algorithm CC-C generates a sample in time $\cO(n 2^n) + \cO(m n^2)$, requiring space $\cO(m)$. To our knowledge, CC-C is currently the best algorithm to simulate Boson Sampling in the regime with no collisions and no additional constraints, e.g. on the depth of the circuit.

\section{\label{sec:appendix-CP}Algorithm CP}

In this section, we elaborate on the Cifuented and Parrilo~\cite{Cifuentes2015} algorithm, denoted CP, from \Cref{sec:permanent-algorithm}, particularly the details of the dynamic programming tables, and the running time. In the following, we write $\per_M(R,C) = \per \restr M{R,C}$ for the permanent of a submatrix of $M$ with rows $R$ and columns $C$, following the notation of~\cite{Cifuentes2015}.

\subsection{\label{sec:appendix-CP:tables}Dynamic programming tables}

\subsubsection{The local $Q$-tables}
\label{sec:Q-tables}

For every node $t \in T$, we have the table of \emph{local permanents} $Q[t](R,C) = \per_M(R,C)$ for all $R \subseteq \rho(t)$ and $C \subseteq \kappa(t)$, i.e. resulting from combinations of row and column vertices contained in the node $t$. These represent submatrices of $M$ with those rows and columns.

\subsubsection{\label{sec:P-tables}The subtree $P$-tables}

Crucial in the construction are the tables of \emph{subtree permanents} $P[t](R,C)$ where $R \subseteq \rho(t)$ and $C \subseteq \kappa(t)$ again, same as in $Q[t]$. Their values are
\begin{equation}
  \label{eq:P-table}
  P[t](R,C) = \per_M (R \cup \overline\Delta{}^\rho_t, C \cup \overline\Delta{}^\kappa_t),
\end{equation}
where $\overline\Delta{}^\rho_t \deq \rho(T_t) \setminus \rho(t)$, and similarly $\overline\Delta{}^\kappa_t\deq \kappa(T_t) \setminus \kappa(t)$. The table $P[t]$ stores the permanents of submatrices represented by the entire subtree $T_t$. Note that the labels of $P[t]$ contain only rows $R \subseteq \rho(t)$ and columns $C \subseteq \kappa(t)$, but the table elements relate to permanents of submatrices with rows and columns coming from the descendants of $t$, ensured by $\overline\Delta_t^\rho$ and $\overline\Delta_t^\kappa$. For leaves $\ell$, we have $P[\ell] = Q[\ell]$, because they have no descendants. Conversely, the root $r \deq \rt(T)$ contains the final result:
\begin{equation}
  \label{eq:root-permanent}
  P[r](\rho(r), \kappa(r)) = \per_M(\rho(T), \kappa(T)) = \per M.
\end{equation}
In (\ref{eq:root-permanent}), the equality follows from the Axiom~\ref{def:treedec-axiom-vertex-cover}, which says that $\rho(T)$ is the set of all rows of the (whole) matrix $M$, and similarly $\kappa(T)$ is the set of all columns.

\begin{note}
  \label{note:Delta-sets-with-withou-bar}
  We remark on the notational distinction between $\overline\Delta{}^{\rho}_t$ (resp. $\overline\Delta{}^{\kappa}_t$) in this section and $\Delta^{\rho}_{c_j}$ (resp. $\Delta^{\kappa}_{c_j}$) in the pseudocode in \cref{fig:pseudocode-CP} in \Cref{sec:permanent-algorithm}. While the former (with bar) corresponds to the entire subtree of $t$, the latter (without bar) corresponds to individual children of $t$ and is defined in the following section.
\end{note}

\subsubsection{Helper tables}

The table $P[t]$ of an internal node $t$ is computed using the $P$-tables of the children of $t$ in a subset convolution. This requires computing helper tables shown below, however, these do not need to be stored for future computation.
The first table $Q'[t | c_j]$ is associated to permanents of submatrices made of rows and columns in $t$ that are shared with its child $c_j$:
\begin{align}
  \label{eq:CP-tables-Q'}
  Q'[t | c_j](R,C)
  & = (-1)^{|R|} \; Q[t](R,C),
\end{align}
where $R \subseteq \Lambda^\rho_{c_j} \deq \rho(c_j) \cap \rho(t)$ and $C \subseteq \Lambda^\kappa_{c_j}\deq \kappa(c_j) \cap \kappa(t)$.

The table $Q''[t \gets c_j]$ brings permanents of submatrices from the child's table $P[c_j]$ up the tree and makes them available to node $t$. The values transferred are those that correspond to submatrices with rows and columns contained in $c_j$ but not in $t$, together with a selection of rows $R \subseteq \Lambda^\rho_{c_j}$ and columns $C \subseteq \Lambda^\kappa_{c_j}$ shared by the two nodes. These index the table:
\begin{equation}
  \label{eq:CP-tables-Q''}
  Q''[t \gets c_j](R,C)
  = P[c_j](R \cup \Delta^\rho_{c_j}, C \cup \Delta^\kappa_{c_j}),
\end{equation}
where $\Delta^\rho_{c_j} \deq \rho(c_j) \setminus \rho(t)$ and $\Delta^\kappa_{c_j} \deq \kappa(c_j) \setminus \kappa(t)$.

We shortly revisit Note~\ref{note:Delta-sets-with-withou-bar}. Recall the $P$-tables in~\eqref{eq:P-table} reference rows $\overline\Delta{}^{\rho}_{t}$ (resp. columns $\overline\Delta{}^{\kappa}_{t}$) corresponding to the whole subtree $T_t$. In~\eqref{eq:CP-tables-Q''}, the helper table $Q''[t \gets c_j]$, the device used to connect $t$ to its subtree, only references the sets $\Delta^\rho_{c_j}$ (resp.~$\Delta^\kappa_{c_j}$) corresponding to the immediate children of $t$. The reason is that the permanents stored in $P[t]$ are obtained recursively using the tree structure, computed from partial permanents stored in the tables $P[c_j]$ of the children~$c_j$ which are computed before \mbox{$Q''[t \gets c_j]$}. This justifies the definition~\eqref{eq:CP-tables-Q''}.

\subsubsection{Subset convolution}

The final step in computing a value of $P[t](R,C)$ for any internal node $t$ (including the root) is to perform the \emph{subset convolution} over $R$ and $C$ of all the prepared tables: the local table $Q[t]$, and the helpers $Q'[t|c_j]$ and $Q''[t \gets c_j]$ of all children $c_j \in \ch(t)$. The subset convolution, given in eq.~(\ref{eq:CP-subset-convolution}) below, computes the value $P[t](R,C)$ as a summation over collections of subsets $\{ R_i^{(\prime,\prime\prime)} \subseteq R \}_i$ and $\{C_i^{(\prime,\prime\prime)} \subseteq C \}_i$ that partition $R$ and $C$, respectively. The formula is:
\begin{widetext}
  \begin{equation}
    \label{eq:CP-subset-convolution}
    P[t](R,C) = \sum_{\{R_i^{(\prime,\prime\prime)}, C_i^{(\prime,\prime\prime)}\}_i} \bigg( Q[t](R_t, C_t)
    \times \prod_{c_j \in \ch(t)}
    Q'[t | c_j](R'_{c_j}, C'_{c_j})
    \times
    Q''[t \gets c_j](R''_{c_j}, C''_{c_j}) \bigg),
  \end{equation}
\end{widetext}
which is labeled (\ref{new:eq:CP-subset-convolution}) in \Cref{sec:permanent-algorithm}.
Substituting the definitions of the helper tables, the above implements the subset convolution from~\cite[Lemma 10]{Cifuentes2015}.

Below, we give a concrete characterization of the subsets $\{R_i^{(\prime,\prime\prime)}, C_i^{(\prime,\prime\prime)}\}_i$, but first we explain at high level the form of the equation. Each summand in (\ref{eq:CP-subset-convolution}) involves all of the tables $Q[t]$, $Q'[t|c_j]$ and $Q''[t \gets c_j]$ (for all children $c_j$ of $t$), and the subsets $\{R_i^{(\prime,\prime\prime)}, C_i^{(\prime,\prime\prime)}\}_i$ index those tables.
We combine values stored in the tables $Q, Q', Q''$ for particular submatrices that are selected using the subsets $\{R_i^{(\prime,\prime\prime)}, C_i^{(\prime,\prime\prime)}\}_i$. These table values are used to compute terms of the permanents stored in $P[t](R,C)$.

Each partition of rows $\{R_i^{(\prime,\prime\prime)} \subseteq R\}_i$ consists of:
\[ R_t \sqcup R'_{c_1} \sqcup R''_{c_1} \sqcup \cdots \sqcup R'_{c_k} \sqcup R''_{c_k} = R, \]
where $k$ is the number of children of $t$.
Similarly each partition of columns $\{ C_i^{(\prime,\prime\prime)} \subseteq C\}_i$ consists of:
\[   C_t \sqcup C'_{c_1} \sqcup C''_{c_1} \sqcup \cdots \sqcup C'_{c_k} \sqcup C''_{c_k} = C. \]

The constraints on what those individual subsets may be are given by the following:
Let $S$ be one of the above tables (e.g. $Q'[t|c_j]$ for some child $c_j$). If for some $\hat R \subseteq R$ and $\hat C \subseteq C$, the value $S(\hat R, \hat C)$ is not defined, we assume $S(\hat R, \hat C) = 0$ or equivalently, we exclude the corresponding partitions from the summation. This reasoning gives us the following contraints on the subsets $\{ R_i^{(\prime,\prime\prime)}, C^{(\prime,\prime\prime)}_i \}_i$ in the partitions for possibly nonzero terms of (\ref{eq:CP-subset-convolution}):
Recall that $Q[t]$ is defined for sets of rows and columns contained in $t$, so
\begin{align*}
  R_t
  & \subseteq \rho(t)
  & \text{and} &
  & C_t
  & \subseteq \kappa(t).
\end{align*}
Similarly, $Q'[t | c_j]$ and $Q''[t \gets c_j]$ are defined for
\begin{align*}
  R'_{c_j}, R''_{c_j}
  & \subseteq \Lambda^\rho_{c_j}
  & \text{and} &
  & C'_{c_j}, C''_{c_j}
  & \subseteq \Lambda^\kappa_{c_j},
\end{align*}
respectively.

\subsection{\label{app:CP-running-time}Running time details of CP}

We now derive the running time of algorithm CP, following \cite[Lemma 2, Theorem 12]{Cifuentes2015}.
In the following, we have a matrix $M \in \C^{n \times n}$, a tree decomposition $\mathrm{T} = (T, \rho, \kappa)$ of its graph $G(M)$, and this tree decomposition has treewidth $\omega$.
\begin{lemma}
  \label{lem:app:runtime-Q}
  For a node $t \in T$, computing the table $Q[t]$ takes time $\cO(2^\omega \omega^2)$.
\end{lemma}
\begin{proof}
  (See Lemma 2 in \cite{Cifuentes2015}.)
  We compute the table $Q[t]$ for pairs of subsets $R \subseteq \rho(t)$ and $C \subseteq \kappa(t)$, with values $Q[t](R,C) = \per_M(R,C)$ by Laplace expansion:
  \[ \per_M(R,C) = \sum_{c \in C} M_{r,c} \per_M(R \setminus \{r\}, C \setminus \{c\}), \]
  where $r \in R$ is a selected row (we choose $r = \min R$ for convenience). Note that $\per_M(\varnothing, \varnothing) = 1$ and $\per_M(R,C) = 0$ if the sizes do not match, i.e. $|R| \ne |C|$. Thus we only need to compute permanents of square submatrices selected from the node $t$. We do this by starting with the $1 \times 1$ matrices, and then increase size, on each step using the Laplace expansion.

  There are $\cO(2^{|\rho(t)| + |\kappa(t)|}) = \cO(2^\omega)$ values to compute: this is the size of the domain, though it is an upper bound because as mentioned above, some of the entries are always zero. Noting that $\omega$ is the treewidth, any node $t \in T$ has the size of its contents $|\rho(t)| + |\kappa(t)| - 1 \le \omega$. For each of those steps, we need $\cO(|\kappa(t)|) = \cO(\omega)$ operations -- this is the number of columns in the summation. Finally, for each summand, we need $\cO(\omega)$ time to find the sets $R \setminus r$, $C \setminus c$ and look up the stored value of the corresponding permanent.
  Thus, the time is bounded as $\cO(2^\omega \omega^2)$.
\end{proof}

\begin{lemma}
  \label{lem:app:runtime-P}
  For an internal node $t \in T$, computing the table $P[t]$ takes time $\cO(k_t 2^\omega \omega^2)$, where $k_t$ is the number of children of $t$.
\end{lemma}
\begin{proof}
  (See Theorem 12 in \cite{Cifuentes2015}.)
  First note that the helper tables $Q'[t|c_j]$ and $Q''[t \gets c_j]$ for each child $c_j$ of $t$ simply reference $Q[t]$ and $P[c_j]$, in the latter case with an inexpensive set union in the argument. The relevant computation happens in the subset convolution, which is done in time $\cO(k_t 2^\omega \omega^2)$, where Cifuentes \& Parrilo adapt the method from \cite{Bjorklund}.
\end{proof}

\begin{theorem}[running time of CP]
  We can compute $\per M$ in time $\cO(n 2^\omega \omega^2)$.
\end{theorem}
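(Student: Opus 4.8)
The plan is to sum the per-node costs established in the two preceding lemmas over the whole tree $T$. Algorithm~\ref{alg:cifuentes} processes every node exactly once in topological order, and for each node $t$ it performs two kinds of work: building the local table $Q[t]$, and (for internal nodes) building the subtree table $P[t]$ via the subset convolution, where for a leaf $\ell$ one simply sets $P[\ell] = Q[\ell]$ at no extra cost. By Lemma~\ref{lem:app:runtime-Q} the first task costs $\cO(\omega^2 2^\omega)$ per node, and by Lemma~\ref{lem:app:runtime-P} the second costs $\cO(k_t \, \omega^2 2^\omega)$ for an internal node with $k_t$ children.

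First I would add up the $Q$-table contributions, which give a total of $|T| \cdot \cO(\omega^2 2^\omega)$, where $|T|$ is the number of tree nodes. Next I would add up the $P$-table contributions, which total $\sum_{t \in T} k_t \cdot \cO(\omega^2 2^\omega)$. The key combinatorial observation is that $\sum_{t \in T} k_t$ counts each parent-child edge of $T$ exactly once, so it equals $|T| - 1$, the number of edges of the tree. Hence both the $Q$- and $P$-contributions are $\cO(|T| \, \omega^2 2^\omega)$, and the combined running time is $\cO(|T| \, \omega^2 2^\omega)$.

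It then remains to bound $|T|$ by $\cO(n)$. Here I would invoke the standard fact that the graph $G(M)$ of an $n \times n$ matrix has $2n$ vertices, and that any such graph admits a tree decomposition with $\cO(n)$ nodes; equivalently, one may assume without loss of generality that the input decomposition $\mathrm{T}$ is reduced, so that $|T| = \cO(n)$. Substituting this into the bound above yields the claimed running time $\cO(n\, 2^\omega \omega^2)$.

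The main obstacle is precisely this last step: the per-node bounds are immediate consequences of the cited lemmas, but the argument only closes if the number of tree nodes is linear in $n$. For the concrete decompositions used in this paper — the linear $\mathrm{T^C}$ with one node per column — this is transparent, since $|T|$ simply equals the number of columns. In the fully general statement one must either assume the input decomposition is already reduced to $\cO(n)$ nodes, or preprocess it into such a form (for instance by deleting redundant nodes as in Lemma~\ref{lemma:skipping-nodes}), and check that this preprocessing neither changes the computed permanent nor dominates the cost $\cO(n\,\omega^2 2^\omega)$.
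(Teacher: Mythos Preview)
Your proposal is correct and matches the paper's proof essentially step for step: sum the $Q$- and $P$-table costs, use $\sum_{t} k_t = |T|-1$, and then invoke the assumption that $|T| = \cO(n)$ for a reasonable decomposition. The paper likewise treats $|T| = \cO(n)$ as an assumption on the input decomposition rather than proving it in general, so your caveat in the last paragraph is exactly the right level of care.
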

\begin{proof}
  (See Theorem 15, and also Theorem 5 for more context, both in \cite{Cifuentes2015}.)
  To compute the permanent from the whole tree, we have to compute the tables $Q[t]$ and then $P[t]$ for all $t \in T$, which takes the times obtained from Lemmas~\ref{lem:app:runtime-Q} and \ref{lem:app:runtime-P} for each node.

  This gives $\cO(|T| 2^\omega \omega^2)$ for the $Q$ tables. We bound $|T| = \cO(n)$, i.e. the size of the tree is linear in the number of columns (and rows) of the matrix. This is an assumption that we choose a reasonable tree decomposition: we could have a much larger tree, but at that point it would contain a lot of redundancy. Note that in every tree decomposition used in this paper, it is indeed the case that $|T| = \cO(n)$.

  For the $P$ tables, we also note the following: $\sum_{t \in T} k_t = |T| - 1$, i.e. summing the numbers of all children gives the number of non-root nodes: this is because the root is not a child. Then
  \begin{align*}
    \sum_{t \in T} \cO(k_t 2^\omega \omega^2)
    & = \cO(2^\omega \omega^2 \sum_{t \in T} k_t) \\
    & = \cO(|T| 2^\omega \omega^2) \\
    & = \cO(n 2^\omega \omega^2).
  \end{align*}
  Thus we have the bound on the time taken to compute all $Q$ and $P$ tables, and hence the permanent.
\end{proof}

\section{\label{sec:banded-matrices}Nearest-neighbour interferometers and banded matrices}

\Cref{fig:ABA-schematic} depicts the interferometer $\U$ composed of $D$ alternating layers of two-mode gates (LHS), where successive layers are shifted to allow the creation of a causal cone. We show the resulting graph $G(\U)$ in the example case of $D=2$ on the RHS.
Observe that the nearest-neighbour interaction and the alternation of different layers impose that $w_1, w_2$ can only increase by at most one for each added layer.
This is consistent with the result in~\cite{Clements2016} that depth \mbox{$D=m$} is required to generate an arbitrary unitary~$\U$.

\begin{lemma}[bandwidth]
  \label{lemma:bandwidths-of-ABA}
  The bandwidth $w$ of the unitary of a circuit composed of $D$ alternating layers, as shown in \Cref{fig:ABA-schematic}, satisfies the condition:
  \begin{equation}
    \label{eq:bandwidths-of-ABA}
    w = w_1 + w_2 + 1 \le 2D.
  \end{equation}
\end{lemma}

Note that \eqref{eq:bandwidths-of-ABA} is an inequality because some beamsplitters may act as a diagonal matrix, producing no coupling between the modes. This happens when the coupling angle $\theta = 0$ or $\pi$.

\begin{proof}[Proof of Lemma~\ref{lemma:bandwidths-of-ABA}]
  We use the help of \Cref{fig:ABA-schematic}, and we show how the widths grow as we increase the depth~$D$ by adding the appropriate beamsplitter layers. We proceed by induction on $D$. For simplicity, we assume generic coupling angles, i.e.~ignoring the cases where $\theta = 0, \frac\pi2, \pi, \dots$, because we are interested in the maximum possible connectivity between input and output modes. Likewise, we ignore the edge cases of $1 \times 1$ identity blocks at the first or last mode. These simplifications allow us to focus on the main idea, but do not obscure the full picture.

  \paragraph{\label{par:proof-depth-base-case} Base case:} $D=1$.
  A single layer is, by construction, a matrix with $2 \times 2$ blocks on the diagonal. In \Cref{fig:ABA-schematic}, in the first (rightmost) layer of the LHS, observe that input modes are partitioned into pairs of nearest neighbours, and coupling  is only possible within these pairs. Thus the output modes of the layer are also partitioned correspondingly.
  Conclude that the maximum number of nonzero elements in a row or columns is $w = 2 = 2D$.

  \paragraph{\label{par:proof-depth-induction-hypo}Induction hypothesis:} Suppose for $D$ layers, we have bandwidth $w \le 2D$.
  We add a new layer which is of different type than the previous. It also acts on disjoint pairs of nearest-neighbor modes, but it is shifted by one mode in comparison to the previous layer. Observe in \Cref{fig:ABA-schematic} that this, in general, expands the maximum reach of any given input mode by two more output modes: one upward (lower index) and one downward (higher index),\footnote{We are ignoring the edge case when there are no more modes to expand into. Of course, the inequality on bandwidth still holds.} i.e. the new bandwidth is bounded as:
  \[
    w' \le w + 2 \le 2(D+1).
  \]

  Together, paragraphs \hyperref[par:proof-depth-base-case]{a} and \hyperref[par:proof-depth-induction-hypo]{b} prove the Lemma by induction.
\end{proof}

\section{Tree decomposition manipulations}
\label{sec:app:tree-decomp-manip}

\subsection{\label{sec:app:permutations-columns}Permutations of columns}

In \Cref{sec:permutation-columns}, we show how to implement a column permutation on the tree decomposition. We prove below that this is correct:

\begin{lemma}[column permutation]
  \label{lemma:permutation-treedec-iso}
  Let $\mathrm{T} = (T,\rho,\kappa)$ be a tree decomposition corresponding to (the graph of) a matrix $M$. Let $\alpha$ be a permutation, and let $M'$ be a matrix of the same dimension as $M$ with components $M'_{i,j} = M_{i, \alpha(j)}$. This has the same columns as $M$, but their order is given by~$\alpha$.

  We construct a tree decomposition $\mathrm{T}' = (T, \rho, \kappa')$ of $M'$ with the same tree and row function by permuting the column labels for each $t \in T$:
  \[
    \kappa'(t) \deq \big\{ \bra{\alpha^{-1}(j)} \;\big|\; \bra j \in \kappa(t) \big\}.
  \]
\end{lemma}

\begin{proof}
  We show this in terms of $G(M')$, the bipartite graph of the matrix $M'$, which comes from the view of $M'$ as a sum of outer products; see \Cref{new:fig:graphical-rep}.
  The permuted matrix is:
  \begin{align*}
    M'
    & = \sum_{i,j} M'_{i,j} \ket i \! \bra j \\
    & = \sum_{i,j} M_{i,\alpha(j)} \ket i \! \bra j
      = \sum_{i,j} M_{i,j} \ket i \! \bra{\alpha^{-1}(j)}.
  \end{align*}
  Permuting the columns of $M$ to obtain $M'$ is equivalent to renaming the vertices of $G(M)$ corresponding to columns by $\alpha^{-1}$ while keeping the connectivity and edge weights the same. This action is a graph isomorphism. Renaming those same vertices as contents of the tree decomposition $\mathrm T$ to obtain $\mathrm T'$ ensures that $\mathrm T'$ is a tree decomposition of $G(M')$.
\end{proof}

\subsection{\label{sec:appendix-restr-tree-decomp}Restriction of a tree decomposition}

As introduced in \Cref{new:sec:submatrices,fig:skipping-nodes-original,fig:skipping-nodes-restriction}, we implement the taking of submatrices as restrictions of tree decompositions, intuitively, deleting labels from within the nodes of the tree. In this section, we show the formal details:

\begin{definition}[tree restriction]
  \label{def:restriction-of-treedec}
  Let $\mathrm{T} = (T, \rho, \kappa)$ be a tree decomposition of the graph $G(\U) = (V = \Rows \sqcup \Cols, E)$, and let $\Rows' \subseteq \Rows$ and $\Cols' \subseteq \Cols$ be subsets of rows and columns, respectively.
  Define the \emph{restriction} of $\mathrm{T}$ to those subsets as $\restr{\mathrm{T}}{\Rows', \Cols'} = (T, \rho', \kappa')$. It has the same tree $T$, but the contents of its nodes are given by $\rho' : T \to 2^{\Rows'}$ and $\kappa' : T \to 2^{\Cols'}$ which are defined as $\rho'(t) = \rho(t) \cap \Rows'$ and $\kappa'(t) = \kappa(t) \cap \Cols'$ for all $t \in T$.
\end{definition}

This tree restriction is a valid representation of the equivalent restriction of the matrix, that is the submatrix with the selected rows and columns. Formally:
\begin{lemma}
  \label{lemma:restriction-submatrix}
  In the above notation, the restriction $\restr{\mathrm{T}}{\Rows', \Cols'}$ is a tree decomposition of the submatrix $\restr{\U}{\Rows', \Cols'}$ of $\U$ that has rows $\Rows'$ and columns $\Cols'$.
\end{lemma}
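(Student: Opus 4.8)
The plan is to verify directly that $\restr{\mathrm{T}}{\Rows', \Cols'} = (T, \rho', \kappa')$ satisfies the three axioms of \Cref{def:treedec} with respect to the graph $G(\restr{\U}{\Rows', \Cols'})$, using the fact that $\mathrm{T}$ already satisfies them for $G(\U)$. First I would pin down the target graph: the submatrix $\restr{\U}{\Rows', \Cols'}$ has an entry $(r,c)$ equal to $\U_{r,c}$ exactly when $r \in \Rows'$ and $c \in \Cols'$, so its bipartite graph has vertex set $\Rows' \sqcup \Cols'$ and edge set $E' = E \cap (\Rows' \times \Cols')$, i.e.\ precisely those edges of $G(\U)$ whose endpoints both survive the restriction. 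Since $\rho'$ and $\kappa'$ are defined by intersecting the labels of $\mathrm{T}$ with $\Rows'$, resp. $\Cols'$, each axiom should transfer by a one-line set manipulation.

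For Axiom~\ref{def:treedec-axiom-vertex-cover} I would distribute the intersection over the union: $\rho'(T) = \bigcup_{t \in T} (\rho(t) \cap \Rows') = \big(\bigcup_{t} \rho(t)\big) \cap \Rows' = \rho(T) \cap \Rows' = \Rows \cap \Rows' = \Rows'$, using $\Rows' \subseteq \Rows$ and Axiom~\ref{def:treedec-axiom-vertex-cover} for $\mathrm{T}$; the column statement $\kappa'(T) = \Cols'$ is identical. For Axiom~\ref{def:treedec-axiom-edge-cover}, take any $(r,c) \in E'$; then $(r,c) \in E$, so by Axiom~\ref{def:treedec-axiom-edge-cover} for $\mathrm{T}$ some node $t$ has $r \in \rho(t)$ and $c \in \kappa(t)$, and since $r \in \Rows'$, $c \in \Cols'$ we get $r \in \rho(t) \cap \Rows' = \rho'(t)$ and $c \in \kappa'(t)$, so $t$ covers the edge in the restriction.

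The cleanest observation, which I would use for Axiom~\ref{def:treedec-axiom-subtree}, is that for a surviving row $r \in \Rows'$ the set of nodes containing it is unchanged by the restriction: $\{\, t \in T : r \in \rho'(t) \,\} = \{\, t : r \in \rho(t) \cap \Rows' \,\} = \{\, t : r \in \rho(t) \,\}$, where the last step uses $r \in \Rows'$. Hence this index set is literally the same one that is a subtree by Axiom~\ref{def:treedec-axiom-subtree} for $\mathrm{T}$, and the same holds for any $c \in \Cols'$; deleted vertices simply never appear and impose no condition.

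I do not expect a genuine obstacle here, since the underlying tree $T$ is untouched and only the node contents shrink. The one point worth stating explicitly, rather than a difficulty, is that edges of $G(\U)$ lost under restriction (those with a deleted endpoint) are absent from $E'$ and therefore need no covering, so removing labels cannot violate Axiom~\ref{def:treedec-axiom-edge-cover}; dually, no new edges are created. With the three axioms checked, $\restr{\mathrm{T}}{\Rows', \Cols'}$ is a tree decomposition of $\restr{\U}{\Rows', \Cols'}$, as claimed.
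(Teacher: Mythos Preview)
Your proposal is correct and follows essentially the same approach as the paper's own proof: both verify the three axioms of \Cref{def:treedec} directly, using the same distributivity of intersection over union for \ref{def:treedec-axiom-vertex-cover}, the induced-subgraph observation for \ref{def:treedec-axiom-edge-cover}, and the key fact that the set of nodes containing a surviving vertex is unchanged for \ref{def:treedec-axiom-subtree}. The only cosmetic difference is that the paper phrases the last step via preimages $\rho^{-1}, \kappa^{-1}$, while you write out the index sets explicitly.
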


\begin{proof}
  We have a tree decomposition $\mathrm{T} = (T, \rho, \kappa)$ of $G(\U)$, the graph of matrix $\U$. Let $\Rows' \subseteq \Rows$ be a subset of rows, and $\Cols' \subseteq \Cols$ of columns, and denote $\U' = \restr\U{\Rows', \Cols'}$. We will show that $\restr{\mathrm{T}}{\Rows', \Cols'}$ is a valid tree decomposition of $G(\U')$.

  Recall that both $\mathrm{T}$ and $\restr{\mathrm{T}}{\Rows', \Cols'}$ have the same tree $T$, but the contents of nodes in $\restr{\mathrm{T}}{\Rows', \Cols'}$ are defined as $\rho'(t) = \rho(t) \cap \Rows'$ and $\kappa'(t) = \kappa(t) \cap \Cols'$ for all $t \in T$.

  \paragraph{}
  First, check~\ref{def:treedec-axiom-vertex-cover} requiring that all vertices are included in the tree decomposition:
  \begin{subequations}
    \begin{align}
      \bigcup_{t \in T} \rho'(t)
      \label{eq:restr-rho-expand}
      & =
        \bigcup_{t \in T}
        \bigl[
        \rho(t) \cap \Rows'
        \bigr]
        =
        \left[
        \bigcup_{t \in T} \rho(t)
        \right] \cap \Rows' \\
      & \overset{\text{\ref{def:treedec-axiom-vertex-cover}}}{=}
        \Rows \cap \Rows'
        \overset{(\Rows' \subseteq \Rows)}{=}
        \Rows'.
    \end{align}
  \end{subequations}
  We omit the analogous steps to show $\bigcup_t \kappa'(t) = \Cols'$. Thus, we conclude that $\restr{\mathrm{T}}{\Rows', \Cols'}$ satisfies Axiom~\ref{def:treedec-axiom-vertex-cover} with respect to $G(\U')$.

  \paragraph{}
  Now, we show that $\restr{\mathrm{T}}{\Rows', \Cols'}$ satisfies \ref{def:treedec-axiom-edge-cover} w.r.t. $G(\U')$, requiring that each edge of the latter be contained in some tree node. Note that $G(\U')$ is an induced subgraph of $G(\U)$, meaning that for each edge $(\ket i, \bra j)$ in $G(\U)$, if its endpoints are in $G(\U')$, then the edge is in $G(\U')$ as well (with the same weight). By \ref{def:treedec-axiom-edge-cover} on $\mathrm{T}$ w.r.t. $G(\U)$, for every edge $(\ket i, \bra j)$, there is a node $t \in T$ containing it, i.e. $\ket i \in \rho(t)$ and $\bra j \in \kappa(t)$. If $\ket i \in \Rows'$ and $\bra j \in \Cols'$, then by definition of the restriction they belong to $\rho'(t)$ and $\kappa'(t)$. Hence \ref{def:treedec-axiom-edge-cover} holds for $\restr{\mathrm{T}}{\Rows', \Cols'}$ with respect to $G(\U')$.

  \paragraph{}
  We finish by showing that \ref{def:treedec-axiom-subtree} holds for $\restr{\mathrm{T}}{\Rows', \Cols'}$ w.r.t. $G(\U')$, requiring that a set of nodes containing a vertex forms a subtree. Define the preimage of $\rho$ as $\rho^{-1} : \Rows \to 2^T$ where $\rho^{-1}(\ket i) = \left\{ t \in T \,\middle|\, \ket i \in \rho(t) \right\}$ for all $\ket i \in \Rows$. Define the preimage $\kappa^{-1}$ analogously. These give us the set of nodes that contain a row or column. By Axiom~\ref{def:treedec-axiom-subtree} that holds for $\mathrm{T}$ w.r.t. $G(\U)$, each $\rho^{-1}(\bra i)$ (resp. $\kappa^{-1}(\ket j)$) forms a subtree of $T$. Under the restriction to $\Rows'$ and $\Cols'$, the preimages are clearly just restrictions (on the domain) $(\rho')^{-1} = \restr{\rho^{-1}}{\Rows'}$ and $(\kappa')^{-1} = \restr{\kappa'}{\Cols'}$. Thus we conclude that Axiom~\ref{def:treedec-axiom-subtree} is satisfied for $\restr{\mathrm{T}}{\Rows', \Cols'}$ as a tree decomposition of $G(\U')$.

  We conclude that the restriction $\restr{\mathrm{T}}{\Rows', \Cols'}$ is a valid tree decomposition, and that is it the decomposition of $G(\U')$, the graph of the submatrix $\U' = \restr\U{\Rows', \Cols'}$.
\end{proof}

An important fact, stated in eq.~(\ref{new:eq:width-of-decomposition}), is that the restriction produces a tree decomposition that is no worse than the original:
\begin{lemma}[treewidth]
  \label{lemma:treewidth-of-restriction}
  Using previous notation, the treewidth of the restriction is bounded as
  \begin{equation*}
    \tw\bigl( \restr{\mathrm{T}}{\Rows', \Cols'} \bigr)
    \le \min\bigl( \tw(\mathrm{T}), |\Rows'| + |\Cols'| - 1 \bigr).
  \end{equation*}
\end{lemma}
\begin{proof}
  A restriction may only remove vertices from the decomposition, so the treewidth can only decrease, i.e. $\tw( \restr{\mathrm{T}}{\Rows', \Cols'} ) \le \tw(\mathrm{T})$. Treewidth measures the size of the largest node, and that cannot be larger than $|\Rows'| + |\Cols'|$, i.e. the total number of vertices left. Recall from (\ref{eq:treewidth-of-treedec}) that treewidth is one less than the size of largest node. In general, we cannot know which of the choices $\tw(\mathrm{T})$ or $|\Rows'| + |\Cols'| - 1$ will be lower, hence we keep the minimum operator.
\end{proof}

Finally, we note that the tree restriction $\restr{\mathrm{T}}{\Rows', \Cols'}$ is not necessarily the best decomposition of $\restr{\U}{\Rows', \Cols'}$:  Lemma~\ref{lemma:treewidth-of-restriction} only shows the upper bound on its treewidth. However, crucially, the restriction is easy to find even when an optimal decomposition may not be, e.g. due to column permutations, or if we need a submatrix of a submatrix of~$\U$. Furthermore, as the restriction requires nothing more than taking two set intersections at each node of the tree, it is efficient to compute.

\subsection{\label{sec:redundancy-of-nodes}Redundancy of nodes}

As defined in \Cref{sec:tree-decomp-band}, each node of our tree decomposition contains a single column label. When restricting the decomposition to $\mathrm T' = (T, \rho', \kappa')$, we may end up with a node $t \in T$ without any columns, i.e. $\kappa'(t) = \varnothing$. Indeed, our algorithm will perform such restrictions extensively.
In the following Lemma~\ref{lemma:skipping-nodes}, we prove that in this case, deleting this node and connecting its neighbours in the appropriate way leads to a new tree decomposition, denoted $\mathrm T \setminus t$, that still gives the correct permanent.

\begin{definition}[removing a node]
  Let $\mathrm{T}$ be a tree decomposition with a linear tree $T$ (no branching), as is the case in our algorithm, let $t$ is an internal node (not root nor leaf), and let $p, c$ be the parent and the single child of $t$, respectively. Then $\mathrm T \setminus t$ is a decomposition without the node $t$ where $p$ is connected directly to $c$. If $t$ is the root (has no parent), then delete it and make its child $c$ the new root in $\mathrm T \setminus t$. If $t$ is a leaf, we just delete it. We present an example in \Cref{fig:skipping-nodes-redundant,fig:skipping-nodes-skipped}.
\end{definition}

Below, we show that this preserves the permanent by proving that both $\mathrm T$ and $\mathrm T \setminus t$ represent the same matrix if $\kappa'(t) = \varnothing$. Intuitively, in terms of Algorithm CP, the tables $Q[t]$ and $Q'[t | c_j]$ of node $t$ (and its child $c_j$) become trivial, and the tables $Q''[t \gets c_j]$ and $P[t]$ only act to bring data from the child of $t$ up the tree, making them accessible to the parent of $t$. Thus we may use $\mathrm{T} \setminus t$ instead of $\mathrm{T}$ in our permanent calculation.

\begin{lemma}[redundant nodes]
  \label{lemma:skipping-nodes}
  Let $\mathrm{T} = (T, \rho, \kappa)$ be a tree decomposition of the graph of a matrix $M$, where $T$ is a linear tree. Suppose there is a node $t \in T$ such that $\kappa(t) = \varnothing$ and $\rho(t) \subseteq \rho(T \setminus t)$. Then $\mathrm T \setminus t$ also represents the matrix $M$.
\end{lemma}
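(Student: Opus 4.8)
The plan is to reduce the claim to verifying that $\mathrm{T} \setminus t$ is itself a valid tree decomposition of the same graph $G(M)$. The correctness of Algorithm~\ref{alg:cifuentes} guarantees that it returns $\per M$ whenever it is run on any tree decomposition of $G(M)$; hence, once we know that both $\mathrm{T}$ and $\mathrm{T}\setminus t$ decompose $G(M)$, both computations yield the same value $\per M$, which is exactly the assertion that $\mathrm{T} \setminus t$ represents $M$. Concretely, $\mathrm{T}\setminus t$ is obtained from the linear tree $T = t_1 - t_2 - \cdots - t_m$ by excising $t = t_k$ and reconnecting its neighbours: if $t_k$ is internal we join its parent $t_{k-1}$ directly to its child $t_{k+1}$, if it is the root we promote $t_{k+1}$ to root, and if it is a leaf we simply delete it; all remaining nodes keep their labels $\rho, \kappa$ unchanged. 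In every case the underlying graph of $\mathrm{T}\setminus t$ is again a path on the nodes $t_1, \dots, t_{k-1}, t_{k+1}, \dots, t_m$ in their original order. It then remains to check axioms \ref{def:treedec-axiom-vertex-cover}--\ref{def:treedec-axiom-subtree} of \Cref{def:treedec}.

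Axioms \ref{def:treedec-axiom-vertex-cover} and \ref{def:treedec-axiom-edge-cover} are immediate from the two hypotheses. For the coverage axiom \ref{def:treedec-axiom-vertex-cover}: since $\kappa(t) = \varnothing$, removing $t$ does not delete any column label, so $\kappa(T\setminus t) = \kappa(T) = \Cols$; and since $\rho(t) \subseteq \rho(T\setminus t)$ by assumption, every row label of $t$ already occurs in another node, whence $\rho(T\setminus t) = \rho(T) = \Rows$. For the edge axiom \ref{def:treedec-axiom-edge-cover}: an edge $(r,c) \in E$ requires a node containing both $r$ in its row label and $c$ in its column label, and because $\kappa(t) = \varnothing$ the node $t$ contains no edge at all; thus any edge covered in $\mathrm{T}$ is covered by a node other than $t$, which survives in $\mathrm{T} \setminus t$.

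The main obstacle is the connectivity axiom \ref{def:treedec-axiom-subtree}, and this is exactly where linearity of $T$ is used. In a path, the set of nodes containing a fixed vertex $v$ is, by \ref{def:treedec-axiom-subtree} for $\mathrm{T}$, a contiguous interval $t_a - \cdots - t_b$. I would distinguish two cases according to whether $t = t_k$ lies in this interval. If $v = c$ is a column, then $c \notin \kappa(t) = \varnothing$, so its interval never contains $t$; deleting a node that lies outside a contiguous interval of a path (and re-linking the path in the same order) leaves that interval contiguous. If $v = r$ is a row whose interval contains $t_k$, then $a \le k \le b$, and by \ref{def:treedec-axiom-subtree} for $\mathrm{T}$ any node strictly between two $r$-containing nodes must itself contain $r$, so no gap can open elsewhere; removing $t_k$ then leaves the still-contiguous interval $t_a - \cdots - t_{k-1} - t_{k+1} - \cdots - t_b$ of the reconnected path (with the endpoints shrinking appropriately when $k=a$ or $k=b$, and remaining nonempty since $\rho(t) \subseteq \rho(T\setminus t)$). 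In either case the occurrence set of $v$ is connected in $\mathrm{T} \setminus t$. It is worth stressing that the reconnection step is essential: merely deleting an internal $t$ would disconnect the tree and sever precisely those row-subtrees that passed through $t$, so the prescription to connect its neighbours is what repairs \ref{def:treedec-axiom-subtree}.

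Having verified all three axioms, $\mathrm{T} \setminus t$ is a tree decomposition of $G(M)$, and by correctness of Algorithm~\ref{alg:cifuentes} it yields $\per M$, matching the value computed from $\mathrm{T}$. I expect the only delicate point to be the bookkeeping in \ref{def:treedec-axiom-subtree} for rows whose subtree contains $t$; the linear-tree hypothesis keeps this elementary, whereas for a branching tree one would additionally have to argue that reconnecting the neighbours does not make some vertex appear in two different subtrees joined through $t$, which is the reason the lemma is stated for linear trees only.
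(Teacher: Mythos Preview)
Your proposal is correct and follows essentially the same approach as the paper: both verify that $\mathrm{T}\setminus t$ satisfies Axioms~\ref{def:treedec-axiom-vertex-cover}--\ref{def:treedec-axiom-subtree} for $G(M)$, handling \ref{def:treedec-axiom-vertex-cover} and \ref{def:treedec-axiom-edge-cover} identically from the hypotheses $\kappa(t)=\varnothing$ and $\rho(t)\subseteq\rho(T\setminus t)$. For \ref{def:treedec-axiom-subtree} the paper splits the path into the ancestors $T_1$ and descendants $T_2$ of $t$ and argues via $R_1,R_2,R_\cap$, while you phrase the same idea as ``occurrence sets are contiguous intervals in the path and stay contiguous after excising $t$''; these are the same argument in different words.
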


\begin{note}
  \label{note:rendundancy-subset}
  A subtle but important formal detail is the codition that $\rho(t) \subseteq \rho(T \setminus t)$. The decomposition $\mathrm{T} \setminus t$ is not equivalent to $\mathrm{T}$ for computing the permanent if there exists a row vertex $\ket i \in \rho(t)$ that is not contained in any other node. If such a row vertex $\ket i$ exists, and since $\kappa(t) = \varnothing$, there is no column vertex to which $\ket i$ could be connected. This corresponds to having a row of all zeros in the matrix and consequently $\per M = 0$. In such case, using $\mathrm{T} \setminus t$ we would compute the wrong (generally non-zero) permanent.

  The above problem can arise if we obtain $\mathrm{T}$ as a restriction of some larger tree decomposition, in which case we have to be careful: Suppose there is a column $\bra j$ that has a single non-zero value in row $\ket i$. A restriction that removes $\bra j$ must also remove $\ket i$.
\end{note}

\begin{proof}[Proof of Lemma~\ref{lemma:skipping-nodes}]
  We verify that $\mathrm{T} \setminus t$ satisfies the Axioms of a tree decompossition with respect to the graph $G(M)$. Recall that $G(M)$ has vertices $\Rows \sqcup \Cols$, where $\Rows$ is the set of row labels and $\Cols$ of column labels. There is an edge $(\ket i, \bra j)$ iff $M_{ij}$ is nonzero.

  \paragraph{}
  In the original tree decomposition $\mathrm{T}$, by Axiom~\ref{def:treedec-axiom-vertex-cover}, $\rho(T) = \bigcup_{q \in T}\rho(q) = \Rows$ and similarly $\kappa(T) = \Cols$. In our case, node~$t$ has no columns ($\kappa(t) = \varnothing$), so $\kappa(T \setminus t) = \Cols$ already. Removing $t$ does not violate Axiom~\ref{def:treedec-axiom-vertex-cover} on columns.

  We verify the same for rows: We have $\rho(T) = \Rows$ and $\rho(t) \subseteq \rho(T \setminus t)$. This implies that
  \[
    \Rows = \rho(T) = \rho(t) \cup \rho(T \setminus t) = \rho(T \setminus t).
  \]
  Thus $\mathrm{T} \setminus t$ satisfies Axiom~\ref{def:treedec-axiom-vertex-cover}.

  \paragraph{}
  Now, verify Axiom~\ref{def:treedec-axiom-edge-cover}, stating that every edge must be contained in a node. The decomposition $\mathrm{T}$ satisfies this. Node $t$ has $\kappa(t) = \varnothing$, so it cannot contain any edges. Therefore all edges are contained in $\mathrm{T} \setminus t$, and thus the latter satisfies Axiom~\ref{def:treedec-axiom-edge-cover}.

  \paragraph{}
  Finally, to verify Axiom~\ref{def:treedec-axiom-subtree} requiring that the set of all nodes containing a row (resp. column) forms a subtree of $T$, we note that this is already the case in $\mathrm{T}$, and that removing a node cannot invalidate this.
\end{proof}

\begin{example}[disjoint graphs]
  \label{ex:disjoint-graphs}
  As an illustrative special case, suppose that $t$ has no columns as above, and also no rows, i.e. $\rho(t) = \varnothing$. This clearly satisfies the conditions of Lemma~\ref{lemma:skipping-nodes}. Denote $T_1$ the set of ancestors of $t$ and $T_2$ the set of descendants of $t$, both subtrees of $T$. Since $t$ is completely empty, by Axiom~\ref{def:treedec-axiom-subtree}, the two trees $T_1$ and $T_2$ correspond to blocks of the matrix which share no rows nor columns. We can also view this as joining together two independent graphs (and their tree decompositions), by putting an empty node between them.
  Up to an appropriate permutation of rows and columns, corresponding to a change of basis, the matrix is block diagonal and its permanent is the product of permanents of blocks.
\end{example}

\end{document}